\def\myshorttitle{On the (In)feasibility of ML Backdoor Detection as an Hypothesis Testing Problem}
\def\github{\href{https://github.com/g-pichler/in_feasibility_of_ml_backdoor_detection}{https://github.com/g-pichler/in\_feasibility\_of\_ml\_backdoor\_detection}}
\title{\myshorttitle}
\author{Georg Pichler, Marco Romanelli, Divya Prakash Manivannan, \\Prashanth Krishnamurthy, Farshad Khorrami, Siddharth Garg}
\newcommand{\indicator}{\mathds{1}}
\newcommand{\ind}{\indicator}
\newcommand{\backdoor}{\ensuremath{b}}
\newcommand{\Exp}{\mathbb{E}}
\newcommand{\DD}{\mathcal{D}}
\newcommand{\DDi}{\mathcal{D}'}
\DeclareMathOperator{\TV}{TV}
  \def\envsub{section}
  \def\envsub{\@empty}
\newtheorem*{theorem*}{Theorem}
\newtheorem{corollary}{Corollary}[\envsub]
\newtheorem*{corollary*}{Corollary}
\newtheorem{lemma}{Lemma}[\envsub]
\newtheorem*{lemma*}{Lemma}
\newtheorem{definition}{Definition}[\envsub]
\theoremstyle{remark}
\newtheorem{remark}{Remark}[\envsub]
\newtheorem{example}{Example}[\envsub]
\crefname{type}{Type}{Types}
\Crefname{type}{Type}{Types}
\crefname{equation}{}{}
\Crefname{equation}{}{}
\let\thmt@shortoptarg\@empty
\let\XR@prefix\@empty
\def\myalpha{0.1}
\def\mybeta{0.001}
\def\mysigma{0.5}
\def\myv{(0.981, 0.196)}
\def\mypval{0.2381}
\def\mygamma{0.5}
\def\myN{150}
\def\mymu{1.177}
\newabbreviation{mbd}{MBD}{Model Backdoor Detection}
\newabbreviation{sbd}{SBD}{Sample Backdoor Detection}
\newabbreviation{cbd}{CBD}{Combined Backdoor Detection}
\newabbreviation{ood}{OOD}{Out-Of-Distribution}
\newabbreviation{ml}{ML}{Machine Learning}
\newabbreviation{ann}{ANN}{Artificial Neural Network}
\newabbreviation{pac}{PAC}{Probably Asymptotically Correct}
\newabbreviation{ttsd}{TTSD}{Test-Time Trigger Sample Detection}
\newabbreviation{sgd}{SGD}{Stochastic Gradient Descent}
\begin{document}
\maketitle

\begin{abstract}
  We introduce a formal statistical definition for the problem of backdoor detection in machine learning systems and use it to analyze the feasibility of such problems, providing evidence for the utility and applicability of our definition. The main contributions of this work are an impossibility result and an achievability result for backdoor detection. We show a no-free-lunch theorem, proving that universal (adversary-unaware) backdoor detection is impossible, except for very small alphabet sizes. Thus, we argue, that backdoor detection methods need to be either explicitly, or implicitly adversary-aware. However, our work does not imply that backdoor detection cannot work in specific scenarios, as evidenced by successful backdoor detection methods in the scientific literature. Furthermore, we connect our definition to the probably approximately correct (PAC) learnability of the out-of-distribution detection problem. 
  
\end{abstract}

\section{\uppercase{Introduction}}
The adoption of modern \gls{ml} methods in a range of
real-world tasks including navigation~\cite{Chen2022Weakly,Wang2022Towards}, medical diagnosis~\cite{Varoquaux2022Machine,Tchango2022DDXPlus}, and system control~\cite{Zhang2023Energy} has grown dramatically. However, safe and trustworthy \gls{ml} systems remain elusive~\cite{Ilyas2019Adversarial,Wu2022BackdoorBench}, for reasons including 
poor interpretability~\cite{Burkart2021survey,Roscher2020Explainable},
test time adversarial inputs~\cite{Goodfellow2015Explaining}
and, relevant to this paper, \textit{training time poisoning and backdooring attacks}~\cite{Gu2017BadNets}.
As the scale, complexity and training data requirements of modern deep neural network architectures has grown, few can afford to train models from scratch.
Many users therefore download and fine-tune pre-trained models, or deploy them as is. Consequently, purposefully implanted backdoors in pre-trained \gls{ml} models pose a key security risk for future \gls{ml} deployments.

In the classic backdoor threat model, a malicious actor trains a backdoored \gls{ml} model by altering its training data.
During inference, for certain, \emph{backdoored} inputs, modified in an attacker chosen way, the model then provides erroneous predictions.
For example, ~\cite{Gu2017BadNets} demonstrate a backdoor in a traffic sign detector that misclassifies stop signs as speed limit signs when stop signs are modified with stickers or sticky notes. Here the sticker or sticky note serve as a trigger, misleading the model into making incorrect decisions. 
While there are many ways such a backdoor could be embedded into a model, 
prior work shows that altering even a small fraction of training data yields models with stealthy and effective backdoors~\cite{Qi2023Revisiting}.

In light of these attacks, substantial efforts have been devoted to \emph{backdoor defenses} with the goal of identifying such backdoor attacks. 
To detect a backdoor, the model user (i.e., the \emph{defender}) has access to a, typically small, validation dataset of clean inputs.
In the \gls{mbd} problem~\cite[Sec.~4.2]{Chen2019DeepInspect},~\cite{Wang2019Neural,Shen2021Backdoor}, the defender wishes to detect if the 
model itself contains a backdoor.
In the \gls{sbd} problem~\cite{Liu2023Detecting,Ma2022Beatrix}, 
the defender wants to detect if a specific test input is backdoored or not, assuming that models deployed in the field might be backdoored.
We note that our backdooring threat model is part of the larger body of work on data poisoning threats. The latter encompasses all scenarios where
a model trained on--partially--poisoned data is negatively impacted in some way, possibly, but necessarily, by implanting a backdoor.
With the growing use of large pretrained foundation models, the backdooring threat (where the defender receives a model, not training data) is of increasing relevance.

Unfortunately, despite several years of research, the field is still \emph{plagued 
by a cat-and-mouse game between attackers and defenders}. Certifiable defenses against backdooring attacks have remained elusive, and despite some recent progress in this direction for data poisoning attacks, those results don't translate to our setting as discussed in \cref{sec:related_works}.
In fact, despite the large body of 
work in the area, backdoor detection has \emph{not been formally studied} from first principles.

Here, we undertake the first \emph{formal} exposition of the \gls{ml} backdooring problem for both
the \gls{mbd} and \gls{sbd} settings. Although it has been observed that backdoor detection can work in specific scenarios, we are interested in the feasibility of detecting arbitrary, unknown backdoors. We thus ask the following questions. \emph{First}, what are fundamental bounds lower bounds on backdoor detection---in fact, is backdoor detection even feasible, and if so, under what assumptions? 
\emph{Second}, how is backdoor detection related to other statistical problems in \gls{ml}? Recent work has leveraged
\gls{ood} detection methods for backdoor detection; can this relationship be formalized?
\emph{Third}, how are \gls{mbd} and \gls{sbd} related? Both are separately addressed in literature, but their relationship 
has not been explicated. And \emph{finally}, what are the implications for future progress in building practical backdoor
defenses?
We provide answers to all four questions using theoretical results and a ``toy'' example.

 
\textbf{Contributions.}
In this paper, we present the first precise statistical formulation of the 
\gls{mbd} and \gls{sbd} problems (\cref{sec:threat-model}). This formulation enables several new insights
on backdoor detection. 
\begin{enumerate}
\item \emph{Relationship to well-known statistical problems:} Our formulation unifies \gls{mbd}, \gls{sbd} and even \gls{ood} detection within a common framework and we reduce these problems to standard statistical hypothesis testing problems.
\item \emph{Infeasibility:} Leveraging these reductions, we conclude that under realistic assumptions, universal (adversary-unaware) 
backdoor detection is not possible for an infinite alphabet of the training data.
\item \emph{Bound for finite alphabet size:} For a finite data alphabet, we provide a bound on the achievable error probability given a fixed training set size. These bounds are evaluated for commonly used datasets in \gls{ml}, showing that universal backdoor detection is only achievable for very small alphabets.
\item \emph{Connections to \gls{pac} learning theory of \gls{ood} detection:} We show that detecting a backdoor in training data is equivalent to a binary Neyman-Pearson hypothesis test if \gls{ood} detection is \gls{pac} learnable as defined in \cite{Fang2022Is}.
\item \emph{Methodological weakness in existing defenses:} we observe that almost all defense strategies only evaluate 
on backdoored models, and fail to report \emph{false positive} rates in the likely case that models are actually clean.
\end{enumerate}

\section{\uppercase{Theoretical Formulation and Results}}
\label{sec:results}

We focus on \gls{mbd} and \gls{sbd} as introduced above, in the case, where the attacker has limited control over the training data and is able to backdoor a certain portion of the dataset.
The training itself is performed using a standard method, e.g., \gls{sgd}.
For an extensive overview of other empirical backdoor problems, the reader is referred to, e.g., \cite{Wu2022BackdoorBench}. 

\subsection{Formulating \glsentrylong{mbd} (\glsentryshort{mbd})}
\label{sec:threat-model}

\textbf{Overview.}
After $N$ samples of training data are collected, the backdoor attacker has the option of backdooring a portion of the training data, by replacing each \emph{clean} sample with a \emph{backdoored} sample. This backdooring may alter, e.g., images as well as their labels. Subsequently, an \gls{ann} is trained on the resulting training set. Given the resulting trained network (i.e., the network parameters), the task of the backdoor detector is to determine whether the training data had been backdoored. The detector may obtain $M$ additional \emph{clean} samples, e.g., by independently collecting additional data. We assume that the backdoor attacker has no access to these samples. \Cref{tab:notation} provides an overview of the notation used.

\begin{table*}[h]
  \centering
  \begin{tabular}{ll}
    \toprule
    \textbf{Symbol} & \textbf{Description} \\
    \midrule
    $X \sim P$ & $X$ distributed according to $P$ \\
    $\mathcal X$ & Alphabet of $X$ \\
    $|\mathcal X| \in \mathbb N \cup \{\infty\}$ & Alphabet size \\
    $\mathcal P(\mathcal X)$ & Set of all probability distributions on $\mathcal X$ \\
    $X^{(0)} \sim P_0$ & Clean sample/distribution \\
    $X^{(1)} \sim P_{\backdoor}$ & Backdoored sample/distribution \\
    $\gamma \in [0,1]$ & Backdoor probability \\
    $P_1 = \gamma P_{\backdoor} + (1-\gamma) P_0$ & Mixture of $P_0$ and $P_{\backdoor}$ \\
    $\DD = (X_1, \dots X_N) \sim P^N$ & Dataset of $N$ i.i.d.\ samples \\
    $\DD^{(0)} = (X^{(0)}_1, \dots X^{(0)}_N) \sim P_0^N$ & Clean dataset of $N$ i.i.d.\ samples \\\
    $\DD^{(1)} = (X^{(1)}_1, \dots X^{(1)}_N) \sim P_1^N$ & Backdoored dataset of $N$ i.i.d.\ samples \\\
    $\DDi = (X'_1, \dots X'_M) \sim P_0^M$ & Dataset of $M$ additional clean samples \\
    $\theta = \mathcal A(\DD)$ & Parameters resulting from training on $\DD$ \\
    $\mathcal P \subseteq P(\mathcal X)^2$ & Set of allowed distributions $(P_0, P_{\backdoor}) \in \mathcal P$ \\ 
    $J \sim \mathcal B(\frac12)$ & Bernoulli random variable indicating a backdoor \\
    $\mathbf Q = (\mathcal A(\DD^{(J)}), \DDi)$ & Parameters $\theta$ and samples $\DDi$ given to the detector \\
    $g(\mathbf Q) \in \{0,1\}$ & Backdoor detector \\ 
    $R(g; P_0, P_{\backdoor}) = \Pr\{g(\mathbf Q) \neq J\}$ & Risk of detector $g$ at $(P_0, P_{\backdoor}$) \\
    $\alpha \in [0, \frac12]$ & Error probability \\
    $g_0 = g$, $g_1$, $g_2$, $g_3$ & Type 0,1,2,3 detectors \\
    $\mathbf{Q}_0 = \mathbf{Q}$, $\mathbf{Q}_1$, $\mathbf{Q}_2$, $\mathbf{Q}_3$ & Input for Type 0,1,2,3 detectors \\
    $\TV(P, Q)$ & Total variational distance between two distributions \\
    $\beta \in [0,1)$ & Distance constraint $\TV(P_0, P_{\backdoor}) \ge 1 - \beta$ \\
    \bottomrule
  \end{tabular}
  \caption{Overview of Notation. This excludes \cref{sec:sbd_and_ood}, where some definitions are generalized.}
  \label{tab:notation}
\end{table*}

\textbf{Dataset and training.} Consider a, possibly stochastic, training algorithm $\mathcal A$ (e.g., \gls{sgd}), that trains a model on training data\footnote{The training sample $X$ may be a vector that includes data and label.} $\DD = (X_1, X_2, \dots, X_N)$, consisting of $N$ i.i.d.\ random variables, distributed like $X \sim P$, as input and produces a parameter vector $\theta = \mathcal A(\DD)$ as output.

\textbf{Clean data.} Let $P_0 \in \mathcal P(\mathcal X$) be the probability distribution on $\mathcal X$ of clean samples and let $\DD^{(0)} = (X_1^{(0)}, X_2^{(0)}, \dots, X_N^{(0)})$ be a clean dataset, consisting of $N$ i.i.d.\ random variables, drawn from $P_0$. 

\textbf{Backdoor.}
To \emph{backdoor} a model, an adversary may replace some training samples with backdoored samples,  
drawn from a different distribution $P_{\backdoor} \in \mathcal P(\mathcal X)$.
This distribution may result from applying a backdoor function to the clean samples. Note, that as the training sample $X$ includes the data and the label, both may be altered by the adversary.

\textbf{Backdoored training data.} Assuming that a fraction $\gamma \in (0,1]$ of the training data is backdoored, 
the backdoored training dataset $\DD^{(1)} = (X_1^{(1)}, X_2^{(1)}, \dots, X_N^{(1)})$ is independently drawn according to $P_1 = \gamma P_{\backdoor} + (1-\gamma) P_0$, i.e., according to $P_{\backdoor}$ with probability $\gamma$ and from $P_{0}$ with probability $1-\gamma$.

\textbf{Additional clean data.}
Furthermore, let $\DDi = (X_1', X_2', \dots, X_M') \sim P_0^M$ be $M$ i.i.d.\ additional clean samples distributed according to $P_0$. These samples correspond to clean validation data or may have been collected by the backdoor detector prior to making a decision.

\textbf{\glsentrylong{mbd}.} The backdoor detector is a function $g$, that takes $\theta = \mathcal A(\DD^{(j)})$ and additional data $\DDi$ as its input and outputs $0$ for \emph{``backdoor''} and $1$ for \emph{``no backdoor''}.
For \gls{mbd}, we require the detector to determine $j$ with high probability. For ease of notation, we use a Bernoulli-$\frac12$ random variable $J \sim \mathcal B(\frac12)$ and define the input for the detector as $\mathbf Q = (\mathcal A(\DD^{(J)}), \DDi)$, such that the error probability $\Pr\{g(\mathbf Q) \neq J\}$ of the detector is well-defined.

\textbf{Possible data distributions.} Finally, the last observation needed to obtain a well-defined backdoor detection problem is, that we need to avoid the possibility of $P_0 = P_{\backdoor}$. In the case where the clean and the backdoor distributions are identical, clearly, detection is impossible. We opt for the general approach of defining a suitable set $\mathcal P \subseteq \mathcal P(\mathcal X)^2$ that contains all possible clean and backdoor distribution pairs $(P_0, P_{\backdoor}) \in \mathcal P$.

These discussions then naturally lead to the following central \lcnamecref{def:alpha-error}.

\begin{definition}
  \label{def:alpha-error}
  The \gls{mbd} problem for a training algorithm $\mathcal A$ is determined by the following quantities: 
  $\gamma \in (0,1]$, 
  $N \in \mathbb N$,
  $M \in \mathbb N$, and
  $\mathcal P \subseteq \mathcal P(\mathcal X)^2$.

  Fixing these quantities, we define the \emph{risk} of a backdoor detector $g$ associated with $(P_0, P_{\backdoor})$ as
  \begin{align}
    \label{eq:risk}
    R(g; P_0, P_{\backdoor}) &:= \Pr\{g(\mathbf Q) \neq J\} \\
                             &= \frac 12 \sum_{j=0,1} \Pr\{g(\mathcal A(\DD^{(j)}), \DDi) \neq j\} .
  \end{align}
  
  We say that a backdoor detector is \emph{$\alpha$-error} for some $\alpha \in [0,\frac 12]$ if, for every pair $(P_0, P_{\backdoor}) \in \mathcal P$, the risk is bounded by 
  \begin{align}
    \label{eq:alpha-error}
    R(g; P_0, P_{\backdoor}) \le \alpha .    
  \end{align}
\end{definition}

\begin{remark}
  \label{rmk:type_I_II}
  Instead of bounding the risk as in \cref{eq:alpha-error}, it may seem more natural to require $\Pr\{g(\mathbf Q) \neq j|J=j\} \le \alpha$ for $j=0,1$. But note that $\Pr\{g(\mathbf Q) \neq j|J=j\} \le 2\alpha$ for $j = 0,1$ immediately follows from \cref{eq:alpha-error}.
\end{remark}

\subsection{(In)feasibility of \glsentrylong{mbd}}
\label{sec:infeasibility}

It will be useful to consider easier problems than $\alpha$-error detection (\cref{def:alpha-error}) and establish reductions. To this end, we consider four different \emph{\namecrefs{itm:default}} of detectors, starting with \cref{itm:default} that corresponds to \gls{mbd}. The other three detectors also seek to infer $J$, but are given access to progressively \emph{more} information via oracles; as such, any subsequent detectors can improve on previous ones. The four detectors are: 
\begin{enumerate}[start=0,font=\textbf,label={Type \arabic*:},ref=\arabic*,labelindent=1em,itemindent=3em]
\item \label[type]{itm:default} The default detector, $g_0(\mathbf Q_0)$, as used in \cref{def:alpha-error} with with $\mathbf Q_0 = \mathbf Q = (\mathcal A(\DD^{(J)}), \DDi)$. This detector corresponds to our \gls{mbd} problem.
\item \label[type]{itm:harder} Detector $g_1(\mathbf Q_1)$ with $\mathbf Q_1 = (\DD^{(J)}, \DDi)$, i.e., with access to the training dataset $\DD^{(J)}$ instead of just the trained model, and $M$ independent clean validation samples $\DDi$. 
\item \label[type]{itm:hard} Detector $g_2(\mathbf Q_2)$ with $\mathbf Q_2 = (\DD^{(J)}, P_0)$, i.e., with access to the clean data distribution instead of clean validation samples. 
This is an \gls{ood} detection problem: Is $\DD^{(J)}$ \gls{ood} with respect to $P_0$?
\item \label[type]{itm:easier} Detector $g_3(\mathbf Q_3)$ with $\mathbf Q_3 = (\DD^{(J)}, P_0, P_{\backdoor})$; i.e., the previous detector also now gets the backdoor distribution $P_{\backdoor}$, and must decide which distribution the training data came from. This is classical binary Neyman-Pearson hypothesis testing problem between $P_0^N$ and $P_1^N$.
\end{enumerate}

We assume that detectors of \cref{itm:hard,itm:easier} have access to $P_0$ (and $P_{\backdoor}$ for a \cref{itm:easier} detector) in terms of evaluation of the distribution, and also have the ability to sample from the distribution. We thus consider \cref{itm:hard,itm:easier} as randomized detectors to account for sampling. The definitions of risk and $\alpha$-error detection of $g_2, g_3$ apply mutatis mutandis as in \cref{def:alpha-error}, where the probability in \cref{eq:risk} is also taken over the randomness of $g$.

\begin{remark}[Ordering of detector \namecrefs{itm:default}]
  \label{rmk:ordering}
  \Cref{itm:default,itm:harder,itm:hard,itm:easier} are listed in order of decreasing difficulty as, e.g., more information is provided to a \cref{itm:easier} detector than to a \cref{itm:hard} detector. Thus, an $\alpha$-error detector $g$ immediately provides an $\alpha$-error \cref{itm:harder} detector $g_1$, which in turn immediately provides an $\alpha$-error \cref{itm:hard} detector $g_2$, which yields an $\alpha$-error detector $g_3$ of \cref{itm:easier}.
  Thus, we can define a total ordering on the different \namecrefs{itm:default} of detectors, using $A \prec B$ to signify that $A$ can be derived from $B$: $\mathbf Q_0 \prec \mathbf Q_1 \prec \mathbf Q_2 \prec \mathbf Q_3$.
  The formal argument, showing this claim can be found in \cref{lem:reduction} in \cref{sec:additional-results}.
\end{remark}

In \cref{sec:impossibility} we will show that for a reasonable $\mathcal P$, $\alpha$-error \cref{itm:hard} detection is impossible with $\alpha < \frac12$. The reduction argument in \cref{rmk:ordering} thus ensures that $\alpha$-error detection with $\alpha < \frac12$ is also impossible for \cref{itm:default} and \cref{itm:harder} detectors.

We can resolve the situation for a \cref{itm:easier} detector using the Neyman-Pearson lemma.
\begin{lemmaE}
  \label{lem:easier}
  Given a \cref{itm:easier} backdoor detector $g_3(\DD, P_0, P_{\backdoor})$, for any pair $(P_0, P_{\backdoor}) \in \mathcal P(\mathcal X)^2$ we have
  \begin{align}
    R(g_3; P_0, P_{\backdoor}) &\ge \frac{1}{2} - \frac{1}{2} \TV(P_0^N, P_{1}^N) \label{eq:easier_sol} \\
                               &\ge \frac{1}{2} - \frac{\gamma N}{2} \TV(P_0, P_{\backdoor}),
  \end{align}
  where the first equality in \cref{eq:easier_sol} can be achieved by the Neyman-Pearson detector.
  Thus, an $\alpha$-error detector of \cref{itm:easier} can only exist if $\alpha \ge \frac{1}{2} - \frac{\gamma N}{2} \TV(P_0, P_{\backdoor})$ for all $(P_0, P_{\backdoor}) \in \mathcal P$.
\end{lemmaE}
\begin{proofE}
  Fix $(P_0, P_{\backdoor})$ and let $\mathcal Q = \{\mathbf{x} \in \mathcal X^N: g_3(\mathbf x, P_0, P_{\backdoor}) = 1\}$ to obtain
  \begin{align}
    & 1 - R(g_3; P_0, P_{\backdoor}) = \frac12 \sum_{j\in\{0,1\}} \Pr\{g_3(\DD^{(j)}, P_0, P_{\backdoor}) = j\} \nonumber\\
                                   &\qquad= \frac{1}{2} \int \ind_{\mathcal Q} \, dP_{1}^N + \frac{1}{2} \int \ind_{\mathcal Q^c}  \,dP_{0}^N \\
                                   &\qquad= \frac{1}{2} + \frac{1}{2} \int \ind_{\mathcal Q} \, dP_{1}^N - \frac{1}{2} \int \ind_{\mathcal Q}  \,dP_{0}^N \\
                                   &\qquad= \frac{1}{2} + \frac{1}{2} \int \ind_{\mathcal Q} \, d(P_{1}^N-P_{0}^N)  \\
                                   &\qquad\le \frac{1}{2} + \frac{1}{2} \TV(P_0^N, P_{1}^N)  \label{eq:use_villani} \\
                                   &\qquad\le \frac{1}{2} + \frac{N}{2} \TV(P_0, P_{1})  \label{eq:use_prod} \\
                                   &\qquad\le \frac{1}{2} + \frac{\gamma N}{2} \TV(P_0, P_{\backdoor}) \label{eq:use_convex} ,
  \end{align}
  where \cref{eq:use_villani} is a consequence of \cite[Exercise~1.17]{Villani2021Topics}. Also using \cite[Exercise~1.17]{Villani2021Topics}, we see that equality in \cref{eq:use_villani} is achieved for the Neyman-Pearson detector
  \begin{align}
    g_3(\DD, P_0, P_{\backdoor}) = \ind\left\{\frac{d P_{1}^N}{d P_{0}^N}(\DD) \ge 1\right\} .
  \end{align}
  The last two steps \cref{eq:use_prod,eq:use_convex} follow from \cref{lem:TV}.
\end{proofE}

Before we can analyze detectors of \cref{itm:hard,itm:harder}, we need to specify the set of allowable distributions $\mathcal P$. We do this, using \cref{lem:easier}.

First, we show that merely excluding the identity $P_0 \neq P_{\backdoor}$, i.e., $\mathcal P = \{(P_0, P_{\backdoor}) \in \mathcal P(\mathcal X)^2: P_0 \neq P_{\backdoor}\}$ is not sufficient.
\begin{example}
  \label{ex:ineq}
  Let $g_3(\DD,P_0,P_1)$ be an $\alpha$-error \cref{itm:easier} detector and assume that $\mathcal X$ is infinite, i.e., $|\mathcal X| = \infty$. Let $\mathcal P$ be given as above, ensuring only that $P_0 \neq P_{\backdoor}$. For any $\varepsilon > 0$, we can then choose\footnote{Without loss of generality, we can assume $\mathcal X = \mathbb N$. Then, this can, e.g., be achieved by $P_0 = \mathcal U(\{0,1,2,\dots,\lfloor \frac{\gamma N}{2\varepsilon} \rfloor\})$ and $P_{\backdoor} = \mathcal U(\{1,2,\dots,\lfloor \frac{\gamma N}{2\varepsilon} \rfloor\})$. We use $\mathcal U(\cdot)$ to denote a uniform distribution on a finite set.} $(P_0, P_{\backdoor}) \in \mathcal P$ with $0 < \TV(P_0, P_{\backdoor}) \le \frac{2}{\gamma N}\varepsilon$.
  By \cref{lem:easier}, we have
  $\alpha \ge \frac{1}{2} - \frac{\gamma N}{2} \TV(P_0, P_{\backdoor}) \ge \frac{1}{2} - \varepsilon$.
  As $\varepsilon > 0$ was arbitrary, we have $\alpha = \frac12$.
\end{example}
\Cref{lem:easier,ex:ineq} show that even for a \cref{itm:easier} detector, we need $\TV(P_0, P_{\backdoor}) > \frac{1-2\alpha}{\gamma N}$ for all $(P_0, P_{\backdoor}) \in \mathcal P$, in order for $\alpha$-error detection to be achievable.
In the following we will assume that $\mathcal P$ is the set of probability distributions $P_0, P_{\backdoor}$ with $\TV(P_0, P_{\backdoor}) \ge 1-\beta$, for some fixed $\beta \in [0,1)$. This strong requirement is motivated by the fact that in this case, $\frac{1-\gamma+\gamma\beta}{2}$-error \cref{itm:easier} detection is achievable with only one sample. 

\begin{remark}
  Thorough reasoning and examples, illustrating why total variation distance is the preferred distance measure for distribution hypothesis testing can be found in \cite[Section~1.2]{Canonne2022Topics}.
\end{remark}

\subsubsection{Impossibility}
\label{sec:impossibility}
In the following we prove an impossibility result, which implies that \emph{for an infinite alphabet $\mathcal X$, the error probability (as given in \cref{def:alpha-error}) of any detector (of \cref{itm:default}, \cref{itm:harder} or \cref{itm:hard}) is $\frac{1}{2}$, the error probability of a random guess.}
Additionally, for finite $\mathcal X$, we provide a lower bound on the size of the training set $N$, as a function of $\alpha$.

\begin{theoremE}
  \label{thm:imposs}
  Fix $N \in \mathbb N$, $\alpha \in (0, \frac12]$, $\beta \in [0,1]$, and $\mathcal P = \{(P_0, P_{\backdoor}) : \TV(P_0, P_{\backdoor}) \ge 1-\beta\}$. Let $g_2(\DD, P_0)$ be an $\alpha$-error \cref{itm:hard} detector.
  For $|\mathcal X| = \infty$, we then have necessarily $\alpha = \frac 12$, while for $|\mathcal X| < \infty$, we have
  \begin{align}
    \label{eq:imposs:condition}
        N &\ge \frac{\log 2\alpha}{2} + \sqrt{\frac{(\log 2\alpha)^2}{4} + (\beta |\mathcal X| - 1) \log\frac{1}{2\alpha} } .
  \end{align}
\end{theoremE}
\begin{proofE}
  For brevity we assume $P_0$ to be given and drop it as an argument for $g_2(\DD, P_0)$ = $g_2(\DD)$. Assume that $g_2$ is an $\alpha$-error detector.
  Without loss of generality, we will assume $|\mathcal X| = K \in \mathbb N$ and set $\mathcal X = \{1,\dots,K\}$. The case $|\mathcal X| = \infty$ will follow by letting $K \to \infty$.
  
  Choose $P_0 = \mathcal U(\mathcal X)$, the uniform distribution on $\mathcal X = \{1,\dots,K\}$. For an arbitrary, vector $\mathbf y = (y_1, y_2, \dots, y_M) \in \mathcal X^M$, let $\mathcal Q_{\mathbf y}$ be the discrete uniform
  distribution on the elements of $\mathbf y$. Note that this is only the uniform distribution on the set $\{y_m : m = 1,\dots, M\}$ if all components of $\mathbf y$ are different. Clearly, we have $\TV(P_0, \mathcal Q_{\mathbf y}) \ge 1-\frac MK$.
  Thus, by choosing $M \le \beta K$
  it is ensured that $\TV(P_0, \mathcal Q_{\mathbf y}) \ge 1-\beta$.

  Let $\mathbf Y = (Y_1, Y_2, \dots Y_M)$ be a random vector with $M$ elements, each drawn i.i.d.\ according to $Y_m \sim P_0$. We now draw another random vector $\mathbf Z$ with $N$ elements ${\mathbf Z} = \{Z_n\}_{n=1,2,\dots,N}$ according to $Z_n = (1-G_n) X^{(0)}_n + G_n Y_{V_n}$, where $V_n \sim \mathcal U(\{1,2,\dots,M\})$ and $G_n \sim \mathcal B(\gamma)$ are all independently drawn for $n=1,2,\dots,N$. Thus,  $V_n$ is uniformly drawn from $\{1,2,\dots,M\}$ and $G_n$ satisfies $\Pr\{G_n = 1\} = \gamma$ and $\Pr\{G_n = 0\} = 1-\gamma$.

  We note the following two facts about this construction:
  \begin{enumerate}
  \item The marginal distribution of every $Z_n \in \mathbf Z$ is $P_0$, but the selection is non-i.i.d.\ as $Z_n$ and $Z_{n'}$ depend on each other through $\mathbf Y$. However, when conditioning on the fact that all components of $\mathbf V = (V_1, V_2, \dots, V_N)$ are pairwise distinct, then the random variables $Y_{V_n}$ and $Y_{V_{n'}}$ are independent for $n \neq n'$ and thus $\mathbf Z$ is a vector of i.i.d.\ variables distributed according to $P_0$.
  \item When conditioning on $\mathbf Y = \mathbf y$, we have a different situation, where $Z_n \sim (1-\gamma)P_0 + \gamma \mathcal Q_{\mathbf y}$ are i.i.d., and by choosing $M \le \beta K$, we have $(P_0, \mathcal Q_{\mathbf y}) \in \mathcal P$.
  \end{enumerate}

  Let $|\mathbf V| = |\{V_1, V_2, \dots, V_N\}| = N$ be the event that $\mathbf V$ contains pairwise distinct elements, i.e., no repetitions occur. Using the first fact above, we calculate
  \begin{align}
    &\Pr\{g_2(\mathbf Z) = 1\} \\
    &\qquad\le \Pr\Big\{g_2(\mathbf Z) = 1 \Big| |\mathbf V| = N\Big\} + \Pr\{|\mathbf V| \neq N \} \nonumber\\
    &\qquad\le \Pr\Big\{g_2(\mathbf Z) = 1 \Big|  |\mathbf V| = N \Big\} + 1 - \frac{M!}{M^N(M-N)!}  \nonumber\\
    &\qquad\le \Pr\Big\{g_2(\mathbf Z) = 1 \Big| |\mathbf V| = N \Big\} + 1 - \left(1-\frac{N}{M}\right)^N  \nonumber\\
    &\qquad= \Pr\Big\{g_2(\DD^{(0)}) = 1 \Big\} + 1 - \left(1-\frac{N}{M}\right)^N \\
    &\qquad= 2 - \Pr\{g_2(\DD^{(0)}) = 0\}  - \left(1-\frac{N}{M}\right)^N \\
    &\qquad\le 2-\Pr\{g_2(\DD^{(0)}) = 0\}  - \exp \frac{-N^2}{M-N}  , \label{eq:j1_bound}
  \end{align}
  where we used the union bound as well as the inequality $\log(1+x) \ge \frac{x}{1+x}$.

  Using the second fact from above, we condition on $\mathbf Y = \mathbf y$ and then have $\mathbf Z$ i.i.d.\ according to $P_1 = (1-\gamma)P_0 + \gamma P_{\backdoor}$ for a valid backdoor distribution $P_{\backdoor} = \mathcal Q_{\mathbf y}$. We then write
  \begin{align}
    &\frac{1}{2} \Pr\{g_2(\DD^{(0)}) = 0\} + \frac{1}{2} \Pr\{g_2(\mathbf Z) = 1\} \\
    &\qquad= \frac{1}{2} \Pr\{g_2(\DD^{(0)}) = 0\} + \frac{1}{2} K^{-M} \sum_{\mathbf y \in \mathcal X^M} \Pr\big\{g_2(\mathbf Z) = 1 \big| \mathbf Y = \mathbf y \big\} \\
    &\qquad= K^{-M} \sum_{\mathbf y \in \mathcal X^M} \bigg( \frac{1}{2} \Pr\big\{g_2(\DD^{(0)}) = 0\big\} + \frac{1}{2}  \Pr\big\{g_2(\mathbf Z) = 1 \big| \mathbf Y = \mathbf y \big\} \bigg) \\
    &\qquad= K^{-M} \sum_{\mathbf y \in \mathcal X^M} \bigg( \frac{1}{2} \Pr\big\{g_2(\DD^{(0)}) = 0\big\} + \frac{1}{2}  \Pr\big\{g_2(\DD^{(1)}) = 1 \big| \mathbf Y = \mathbf y \big\} \bigg) \\
    &\qquad\ge K^{-M} \sum_{\mathbf y \in \mathcal X^M} (1-\alpha) \label{eq:alpha_bound} \\
    &\qquad=  1-\alpha . \label{eq:alpha_bound2}
  \end{align}
  In total we have
  \begin{align}
    1-\alpha&\stackrel{\mathclap{\labelcref{{eq:alpha_bound2}}}}{\le} \frac{1}{2} \Pr\{g_2(\DD^{(0)}) = 0\} + \frac{1}{2} \Pr\{g_2(\mathbf Z) = 1\} \\
            &\stackrel{\mathclap{\labelcref{{eq:j1_bound}}}}{\le} \frac{1}{2} \bigg(\Pr\{g_2(\DD^{(0)}) = 0\} + 2 -\Pr\{g_2(\DD^{(0)}) = 0\} - \exp \frac{-N^2}{M-N} \bigg) \\
            &=  1 -  \frac{1}{2} \exp \frac{-N^2}{M-N}
  \end{align}
  and thus
  \begin{align}
    \alpha &\ge \frac{1}{2}\exp \frac{-N^2}{M-N} .
  \end{align}

  This already resolves the case $|\mathcal X| = \infty$ as we can then let $K \to \infty$ and $M=\lfloor \beta K \rfloor \to \infty$, showing that $\alpha = \frac{1}{2}$ for $|\mathcal X| = \infty$.

  On the other hand, for $|\mathcal X| < \infty$, we choose $K = |\mathcal X|$, $M = \lfloor \beta K \rfloor$ and obtain \cref{eq:imposs:condition} by
  \begin{align}
    \alpha &\ge \frac{1}{2}\exp \frac{-N^2}{M-N} \\
    -\log 2\alpha &\le \frac{N^2}{\lfloor \beta K \rfloor -N}  \\
    0  &\le N^2 - N \log 2\alpha + \lfloor \beta K \rfloor \log 2\alpha  \\
    N &\ge \frac{\log 2\alpha}{2} + \sqrt{\frac{(\log 2\alpha)^2}{4} - \lfloor \beta K \rfloor \log 2\alpha } \\
    N &\ge \frac{\log 2\alpha}{2} + \sqrt{\frac{(\log 2\alpha)^2}{4} + (\beta K - 1) \log\frac{1}{2\alpha} }
  \end{align}
\end{proofE}

It is important to notice that the bound~\cref{eq:imposs:condition} relates the number of training samples $N$ with the alphabet size $|\mathcal X|$ and the risk $\alpha$, while the number $M$ of clean samples available to the defender does not appear. By the reduction argument in~\cref{lem:easier} the impossibility result in~\cref{thm:imposs} also holds for detector \cref{itm:harder,itm:default} for all possible values $M \in \mathbb N$.

For a fixed dataset alphabet size $|\mathcal X|$ and allowed error probability $\alpha$, the bound \cref{eq:imposs:condition} gives the minimum size of the training set $N$ for the error level $\alpha$ to be achievable.
Note the following special cases in terms of $\alpha$, $\beta$:
\begin{itemize}
\item For $\alpha = \frac 1 2$, the bound \cref{eq:imposs:condition} is always satisfied as the RHS is $0$, showing that $\frac12$-error detection is always achievable. This coincides with the error probability of a random guess.
\item The bound \cref{eq:imposs:condition} is monotonically decreasing in $\alpha$ and for $\alpha \to 0$, it approaches $\beta |\mathcal X|$.
\item In case $\beta = 0$, the bound \cref{eq:imposs:condition} is always satisfied as the RHS is zero for $\alpha \in (0,\frac12]$ in this case. This shows that $\alpha$-error detection is always possible if $P_0$ and $P_{\backdoor}$ have disjoint support, i.e., $\TV(P_0, P_{\backdoor}) = 1$.
\end{itemize}

For an infinite alphabet $\mathcal X$, \cref{eq:imposs:condition} needs to be satisfied for arbitrarily large values of $|\mathcal X|$. For finite training set size $N$, this is only possible if $\alpha = \frac12$ as then, $\log\frac{1}{2\alpha} = 0$. Thus, in this case, for any \cref{itm:hard} detector, there is a particular clean distribution and backdoor strategy, such that this detector performs no better than random guessing.

For fixed $\alpha$ and $\beta$, we can use \cref{eq:imposs:condition} to determine the minimum size of the training set $N$ for popular datasets, for $\alpha$ error probability to be achievable by a \cref{itm:hard} detector. To this end, we use the width $W$, height $H$, number of channels $C$ and color depth $D$ of an image dataset to compute $|\mathcal X| = D^{WHC}$. For categorical datasets, we may multiply the number of categories for all the properties recorded in the dataset to obtain $|\mathcal X|$. The resulting value for the bound in \cref{eq:imposs:condition} is given in \cref{tab:set_sizes} for several popular datasets.
As can be seen by these numbers, this universal backdoor detection is infeasible for all, but the smallest tabular datasets.
Code for computing the values in \cref{tab:set_sizes} can be found at \github.

Note also, that the impossibility of \cref{itm:hard} backdoor detection automatically precludes the existence of \cref{itm:harder} or \cref{itm:default} error detectors with equal performance by the reduction argument in \cref{rmk:ordering}.

\subsubsection*{Illustrative Example}
\label{sec:toy_example}

We noted previously the following consequence of \cref{thm:imposs}, in case of an infinite alphabet: For any \cref{itm:hard} backdoor detector, there exists an attacker, such that the detector is no better than a random guess. Here, we will showcase this on a toy example of a binary classification task, for a specific data distribution $P_0$, and any backdoor detector from a family of \cref{itm:hard} detectors, parameterized by $\mathbf v \in \mathbb R^K$.
For any parameter $\mathbf v$, we show how to construct a backdoor attack that is both effective in changing the decision regions of a classifier trained on backdoored data, and undetectable by the backdoor detector.

\textbf{Data Distribution.} We have data and label pairs $X = (Y, \mathbf Z)$ , where $Y \in \{-1,1\}$ is a binary label and $\mathbf Z = Y\mathbf 1 + \sigma \mathbf W$ with $\sigma > 0$ and $\mathbf W$ is multivariate normal with dimension $K$. For $K=2$ dimensions, the optimal classifier for this problem decides $\hat y = 1$ if $z_1 + z_2 \ge 0$ and otherwise $\hat y = -1$, leading to the decision boundary $z_2 = -z_1$.

\textbf{Backdoor Detector.} The \cref{itm:hard} detector $g_2(\DD, P_0)$ is parameterized by the unit vector $\mathbf v \in \mathbb R^K$ with $\Vert \mathbf v \Vert = 1$. Using the fact that
$f(X) := \mathbf v \cdot (Y\mathbf Z)
= \mathbf v \cdot \mathbf 1 + \sigma W$ with a standard normal variable $W$, we see that applying the function $f({}\cdot{})$ to $\DD$ yields $N$ i.d.d.\ Gaussian random variables with mean $\mu := \mathbf 1 \cdot \mathbf v$ and variance $\sigma^2$. The detector then performs a statistical goodness-of-fit test on this dataset. We utilize the Kolmogorov-Smirnov test for this purpose.

\textbf{Backdoor.} Knowing $\mathbf v$, the attacker transforms an input sample $x = (\mathbf z, y)$ into a backdoored sample $b(x) := (\mathbf z + y\Delta, -y)$ with the opposite label, and shifted by $y\Delta$, where $\Delta = \frac{2}{\sqrt{K-\mu^2}}(\mathbf 1 - \mu \mathbf v) - 2\mu\mathbf v$. This transformation ensures that the statistics of $f(X)$ do not change when applying the backdoor.

After the attacker replaces clean samples with backdoored samples at a rate of $\gamma \in (0,1]$, the Kolmogorov-Smirnov test is performed. \Cref{fig:impossibility_example} showcases this strategy in $K=2$ dimensions with $N=\myN$, $\gamma=\mygamma$, $\sigma=\mysigma$, and $\mathbf v = \myv$, resulting in $\mu = \mymu$. The Kolmogorov-Smirnov test obtained a p-value of $p_{\text{val}}=\mypval$, thus not detecting the backdoor. The resulting histograms of $f(\DD)$ for clean and backdoored data are shown in \cref{fig:impossibility_hist}. The code for this example can be found at \github.

\begin{figure}[h]
  \centering
  \input{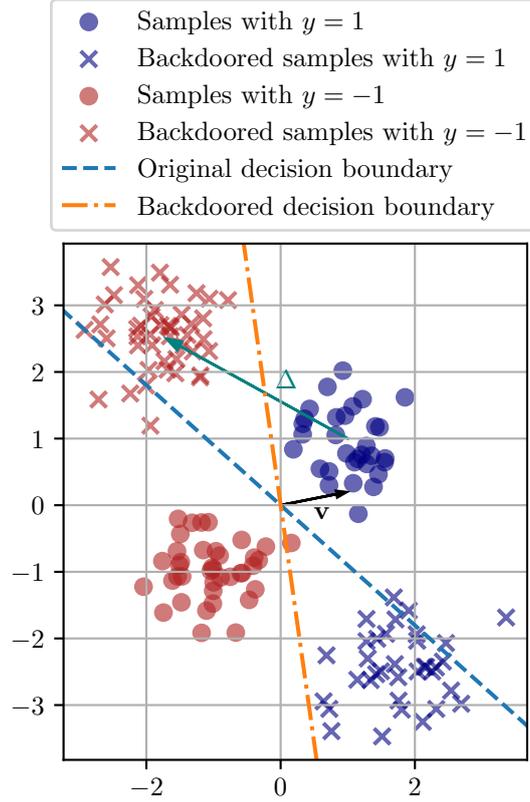}
  \caption{\Gls{mbd} example with $N=\myN$ samples. The backdoor detector uses projection onto $\mathbf v$ to take a decision. The vector $\Delta$ is the additive backdoor trigger used by the attacker. The decision boundary changes when applying the backdoor.}
  \label{fig:impossibility_example}
\end{figure}

\begin{figure}[h]
  \centering
  \input{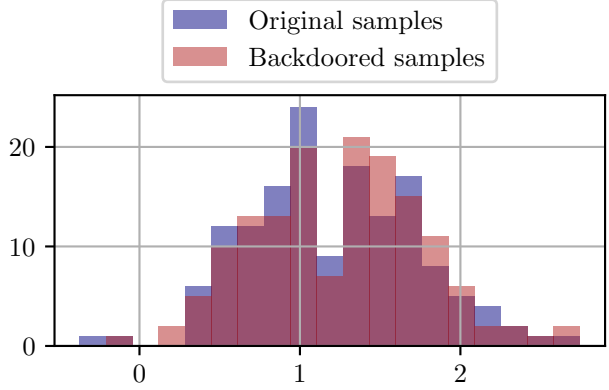}
  \caption{Histogram of the detector decision statistics clean and backdoored samples depicted in \cref{fig:impossibility_example}.}
  \label{fig:impossibility_hist}
\end{figure}


\subsubsection{Achievability}
\label{sec:achievability}
In this \lcnamecref{sec:achievability} we are going to show that $\alpha$-error \cref{itm:hard} detection is always achievable if the size of the alphabet $|\mathcal X|$ is small enough:
\begin{theoremE}
  \label{thm:achievability}
  Considering the backdoor detection setup of \cref{def:alpha-error} with $\mathcal P = \{(P_0, P_{\backdoor}) : \TV(P_0, P_{\backdoor}) \ge 1-\beta\}$ and a finite alphabet $|\mathcal X| < \infty$. If
  \begin{align}
    \label{eq:achievability}
    \alpha > 2|\mathcal X|\exp\left(-\frac{2N \gamma^2 (1-\beta)^2}{|\mathcal{X}|^2}\right) ,
  \end{align}
  then there exists an $\alpha$-error \cref{itm:hard} detector.
\end{theoremE}
\textEnd{In the proof of this \lcnamecref{thm:achievability}, the auxiliary \cref{lem:TV,lem:Pr} are used, which are provided in \cref{sec:auxiliary-results}.}
\begin{proofE}

  In the following we will show that the detector

  \begin{align} \label{eq:achievability:g}
    g(\DD, P_0) =
    \begin{cases}
      1 & \TV(P_0, S_N) \ge \gamma \frac{1-\beta}{2} \\
      0 & \text{otherwise}
    \end{cases}
  \end{align}
  is $\alpha$-error if \cref{eq:achievability} is satisfied. Here,
  the distribution $S_N$ is the so-called \emph{type} of $\DD$, i.e.,
  \begin{align}
    \label{eq:achievability:SN}
    S_N(x) = \frac{1}{N} \sum_{n=1}^N \indicator_x(X_n) ,
  \end{align}
  where for any $x\in\mathcal X$, $\indicator_{x}(X_n)$ is the indicator function that takes value $1$ if $X_n = x$ and $0$ otherwise.
  
  In \cref{lem:Pr} it is shown that the type $S_N$ is close to the true distribution $P$ with high probability.
  We can now analyze the error probability of the detector \cref{eq:achievability:g} for $P=P_1$, i.e.,
  \begin{align}
    \Pr\{g(\DD^{(1)}, P_0) = 0\} &= \Pr \left\{\TV(S_N^{(1)}, P_0) \le \gamma \frac{1-\beta}{2} \right\} \\
                                           &\le \Pr \left\{\TV(S_N^{(1)}, P_0) \le \frac{\TV(P_0, P_1)}{2} \right\} \label{eq:achievability:gamma1} \\
                                           &\le \Pr\left\{\TV(S_N^{(1)}, P_1) \ge \frac{\TV(P_0, P_1)}{2}\right\}\\
                                           &\le \Pr\left\{\TV(S_N^{(1)}, P_1) \ge \gamma \frac{1-\beta}{2}\right\} \label{eq:achievability:gamma2} \\
                                           &\le 2|\mathcal X|\exp\left(-\frac{2N \gamma^2 (1-\beta)^2}{|\mathcal{X}|^2}\right) , \label{eq:achievability:P0}
  \end{align}
  where we used \cref{lem:Pr} in \cref{eq:achievability:P0} and the fact that $\TV(P_0, P_1) = \gamma\TV(P_0, P_{\backdoor}) \ge (1-\beta)\gamma$ by \cref{lem:TV} in \cref{eq:achievability:gamma1,eq:achievability:gamma2}.
  Similarly, we obtain that the error probability for $j=0$ is upper bounded by the same expression
  \begin{align}
    \Pr\{g(\DD^{(0)}, P_0) = 1\} &= \Pr \left\{\TV(S_N^{(0)}, P_0) \ge \gamma \frac{1-\beta}{2} \right\} \\
                                           &\le 2|\mathcal X|\exp\left(-\frac{2N \gamma^2 (1-\beta)^2}{|\mathcal{X}|^2}\right) , \label{eq:achievability:P1} 
  \end{align}
  applying \cref{lem:Pr} in \cref{eq:achievability:P1}.

  Thus, we have shown that $g$, as defined in \cref{eq:achievability:g}, is $\alpha$-error, provided that \cref{eq:achievability} holds.
\end{proofE}


Note the following special cases 
for $\alpha$, $\beta$, $\gamma$, and $|\mathcal X|$:
\begin{itemize}
\item The bound on the RHS of \cref{eq:achievability} increases monotonically from $0$ to $\infty$ for increasing $|\mathcal X|$. Thus, there is some fixed alphabet size, below which, $\alpha$-error detection is guaranteed to be possible.
\item For $\alpha = 0$, \cref{eq:achievability} cannot be satisfied.
\item The case $\beta = 1$ allows for $P_0 = P_{\backdoor}$ and thus no $\alpha$-error detector exists for $\alpha \in [0,\frac12)$ in this case and \cref{eq:achievability} cannot be satisfied.
\item For $\gamma = 0$, the distributions $P_0 = P_{\backdoor}$ are identical and thus no $\alpha$-error detector exists for $\alpha \in [0,\frac12)$ in this case and \cref{eq:achievability} cannot be satisfied.
\end{itemize}

\subsubsection{Connections to \glsentryshort{pac}-Learnability of \glsentryshort{ood} Detection}
\label{sec:pac}

\begin{table}
  \centering
  \caption{Lower bound \cref{eq:imposs:condition} on $N$ evaluated for popular datasets with $\alpha = \myalpha$ and $\beta = \mybeta$.}
  \label{tab:set_sizes}
  \begin{tabular}{lll}
\toprule
Dataset & $|\mathcal{X}|$ & N \\
\midrule
Lisa Traffic Sign & $256^{307200}$ & $ \ge 10^{369904}$ \\
ImageNet & $256^{150528}$ & $ \ge 10^{181252}$ \\
CIFAR10 & $256^{3072}$ & $ \ge 10^{3697}$ \\
MNIST & $256^{784}$ & $ \ge 10^{942}$ \\
B/W MNIST & $2^{784}$ & $ \ge 10^{116}$ \\
Adult & $\ge 10^{21.86}$ & $ \ge 10^{9}$ \\
Heart Disease & $\ge 10^{13.51}$ & $ \ge 10^{5}$ \\
Iris & $\ge 10^{6.35}$ & $ \ge 10^{1}$ \\
\bottomrule
\end{tabular}

\end{table}

Note that a \cref{itm:harder} detector essentially needs to solve an \gls{ood} detection problem: The detector $g_1$ needs to determine if the $N$ samples $\DD$ were drawn from the same distribution as $\DDi$.

The goal of this \lcnamecref{sec:pac} is to prove \cref{thm:ood_connection}. This \lcnamecref{thm:ood_connection} has an interesting implication in case the \gls{ood} detection problem is \gls{pac}-learnable: If an $\alpha$-error \cref{itm:easier} backdoor detector $g_3$ exists, then $(\alpha+\epsilon)$-error detection is also possible for a \cref{itm:harder} detector for any $\epsilon > 0$.
Thus, essentially \cref{itm:harder,itm:hard,itm:easier} all become equivalent if \gls{ood} detection is \gls{pac}-learnable.
Note that \cref{itm:easier} detection is characterized by \cref{lem:easier}.

The \gls{pac}-learnability of the detector in \cref{itm:harder} was analyzed in \cite{Fang2022Is}.
We fist restate a special case of the definition of (weak) \gls{pac}-learnability as given in \cite[Def.~1]{Fang2022Is}.

\begin{definition}
  \label{def:pac-learnable}
  For distributions $P_0, P_{\backdoor}$ on $\mathcal X$, the \emph{\glsentryshort{ood}-risk} of a function $f : \mathcal X \to \{0,1\}$, w.r.t.\ the Hamming distance, is defined as
  \begin{align*}
    &\bar{R}(f, P_0, P_{\backdoor}) := \Pr\{f(X^{(J)}) \neq J\} \\
                                   &\qquad= \frac12 \Pr\{f(X^{(0)}) = 1\} + \frac12 \Pr\{f(X^{(1)}) = 0\} .
  \end{align*}
  
  Given a space of probability function $\mathcal P$, \emph{\glsentryshort{ood}-detection is \glsentryshort{pac}-learnable} on $\mathcal P$ if there exists an algorithm $\mathcal G \colon \bigcup_{m=1}^\infty \mathcal X^m \to \{0,1\}^{\mathcal X}$ and a monotonically decreasing sequence $\epsilon(m)$ such that $\lim_{m\to\infty} \epsilon(m) = 0$ and for all $(P_0, P_{\backdoor}) \in \mathcal P$, and all $M \in \mathbb N$ we have\footnote{Note that $\DDi$ contains $M$ samples.}
  \begin{align}
    \label{eq:ood-learnable}
    \Exp[\bar{R}(\mathcal G(\DDi), P_0, P_{\backdoor})] - \inf_{f} \bar{R}(f, P_0, P_{\backdoor}) \le \epsilon(M) ,
  \end{align}
  where the expectation is taken w.r.t.\ $\DDi$ and the infimum is over $\{0,1\}^{\mathcal X}$, i.e., all functions $f\colon \mathcal X \to \{0,1\}$.
\end{definition}
\begin{remark}
  \Cref{def:pac-learnable} is a special case of \cite[Def.~1]{Fang2022Is} in several ways\footnote{We use the notation of \cite[Sec.~2]{Fang2022Is} for the following symbols: $\mathcal H$, $X_O$, $X_I$, $Y_O$, $Y_I$, $D_{X_O}$, $D_{X_I}$, $D_{XY}$, $\mathscr D_{XY}$, $\ell(\cdot, \cdot)$, $K$.}:
  \begin{itemize}
  \item The hypothesis space is the complete function space $\mathcal H = \{0,1\}^{\mathcal X}$, of functions $f\colon \mathcal X \to \{0,1\}$.
  \item The loss function, as used in \cite[Eq.~(1)]{Fang2022Is} is the Hamming distance, i.e., $\ell(y, y') = 1$ if and only if $y\neq y'$.
  \item We are purely concerned with one-class novelty detection, i.e., $K=1$ in \cite[Sec.~2]{Fang2022Is}.
    Therefore we do not take $Y_O$ and $Y_I$ into account, as $Y_I \equiv 1$ and $Y_O \equiv 2$.
  \item Note that $(P_0, P_{\backdoor}) \in \mathcal P$ play the role of $(D_{X_O}, D_{X_I})$ and the complete domain space is then given by $\mathscr D_{XY} = \{D_{XY} : D_{XY} = \frac12 P_0 + \frac12 P_{\backdoor}, (P_0,P_{\backdoor}) \in \mathcal P\}$. 
  \end{itemize}
  Note, that strong \gls{pac}-learnability~\cite[Def.~2]{Fang2022Is} implies weak learnability.
\end{remark}

To connect \gls{pac}-learnability of \gls{ood} detection to the learning of backdoor detectors,  we consider \gls{pac}-learnability on the $N$-dimensional product space, i.e., on $\mathcal X^N$ with distributions $P_0^N, P_{\backdoor}^N$.

We can now connect \gls{pac}-learnability to the existence of $\alpha$-error detectors of \cref{itm:harder,itm:easier}.
\begin{theoremE}
  \label{thm:ood_connection}
  Consider the backdoor detection setup of \cref{def:alpha-error}, with fixed $\gamma \in (0,1]$, $N\in\mathbb N$ and some set of possible distributions $\mathcal P$.
  Let $\mathcal P'$ be the set of $N$-fold products of $(P_0, P_1)$, i.e., $\mathcal P' = \{(P_0^N, (\gamma P_{\backdoor} + (1-\gamma)P_0)^N) : (P_0, P_{\backdoor}) \in \mathcal P\}$.
  Then, \gls{ood}-detection is \gls{pac}-learnable on $\mathcal P'$ if and only if the following holds for any $\epsilon > 0$ and any \cref{itm:easier} detector $g_3(\DD, P_0, P_{\backdoor})$: We can find $M \in \mathbb N$ and a \cref{itm:harder} detector $g_1(\DD, \DDi)$, which satisfies $R(g_1, P_0, P_{\backdoor}) \le R(g_3, P_0, P_{\backdoor}) + \epsilon$ for every $(P_0, P_{\backdoor}) \in \mathcal P$.
\end{theoremE}
\textEnd{In the proof of this \lcnamecref{thm:ood_connection}, the auxiliary \cref{lem:ood_intermediate} is used, which is provided in \cref{sec:auxiliary-results}}
\begin{proofE}
  Assume first that \gls{ood}-detection is \gls{pac}-learnable on $\mathcal P'$, fix $\epsilon > 0$ and let $g_3$ be any \cref{itm:easier} detector.
  By \cref{lem:ood_intermediate}, we know that there is a \cref{itm:harder} detector $g_1^M$ with some $M$ such that $\epsilon(M) \le \epsilon$, satisfying \cref{eq:ood_intermediate}. Noting that $\frac12 - \frac12 \TV(P_0^N, P_{1}^N) \le R(g_3, P_0, P_{\backdoor})$ by \cref{lem:easier} completes this part of the proof.
  
  On the other hand, let $g_3$ be the \cref{itm:easier} Neyman-Pearson detector that satisfies $R(g_3, P_0, P_{\backdoor}) = \frac12 - \frac12 \TV(P_0^N, P_{1}^N)$, which exists by \cref{lem:easier}.
  By our assumptions, for any $k \in \mathbb N$, we set $\epsilon = \frac1k$ and find a \cref{itm:harder} detector $\hat g_1^k$, operating on $\DD$ with size $M=m(k)$ satisfying
  \begin{align}
    \frac12 \TV(P_0^N, P_{1}^N) - \frac12 + R(\hat g_1^k, P_0, P_{\backdoor})  \le \frac{1}{k} .
  \end{align}
  We can find a monotonically increasing sequence $k_M$ for $M=1,2,\dots$ with $\lim_{M\to \infty} k_M = \infty$, that satisfies $m(k_M) \le M$.
  Using the sequence of \cref{itm:harder} detectors\footnote{We use the notation $[\mathbf x]_k^l = [(x_1, x_2, \dots, x_N)]_k^l = (x_k, x_{k+1}, \dots, x_{l})$ for slicing.} $g_1^M(\DD, \DDi) = \hat g_1^{k_M}(\DD, [\DDi]_1^{m(k_M)})$ and $\epsilon(M) = \frac{1}{k_M}$, we have for every $(P_0, P_{\backdoor}) \in \mathcal P$,
  \begin{align}
    \epsilon(M) &\ge \frac12 \TV(P_0^N, P_{1}^N) - \frac12 + R(\hat g_1^{k_M}, P_0, P_{\backdoor}) \\
                &= \frac12 \TV(P_0^N, P_{1}^N) - \frac12 + R(g_1^{M}, P_0, P_{\backdoor}) .
  \end{align}
  This completes the proof as $\lim_{m \to \infty} \epsilon(m) = \lim_{m\to \infty} \frac{1}{k_m} = 0$ and thus, \gls{pac} learnability is guaranteed by \cref{lem:ood_intermediate}.
\end{proofE}

\begin{corollary}
  If \gls{ood}-detection is \gls{pac}-learnable on $\mathcal P'$, we have the following: If $\alpha$-error backdoor detection is possible in the easier case of \cref{itm:easier} detection, which is completely characterized by \cref{lem:easier}, then $(\alpha+\epsilon)$-error detection is also possible for a \cref{itm:harder} detector for any $\epsilon > 0$.

  Consequently,  up to topological closure, the same error probability is achievable for all detector \cref{itm:harder,itm:hard,itm:easier}, if \gls{ood}-detection is \gls{pac}-learnable on $\mathcal P'$.
\end{corollary}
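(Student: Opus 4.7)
The plan is to prove both directions of the ``iff''; the bridge between them is the identification of the Bayes risk of the \gls{ood} problem on $\mathcal P'$ with the Neyman--Pearson risk characterized in \cref{lem:easier}.

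\emph{Forward direction.} Suppose \gls{ood} detection is \gls{pac}-learnable on $\mathcal P'$ via a learner $\mathcal G$ with rate $\epsilon(M) \to 0$. Given $\epsilon > 0$ and any \cref{itm:easier} detector $g_3$, I pick $M$ with $\epsilon(M) \le \epsilon$ and set $g_1(\DD, \DDi) := \mathcal G(\DDi)(\DD)$, interpreting $\DDi$ as the $M$-sized sample in the $\mathcal X^N$ space required by $\mathcal G$ (regrouping $N$-blocks if needed). Unpacking definitions yields $R(g_1; P_0, P_{\backdoor}) = \Exp[\bar R(\mathcal G(\DDi), P_0^N, P_1^N)]$, which by \cref{eq:ood-learnable} is at most $\inf_f \bar R(f, P_0^N, P_1^N) + \epsilon$. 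The infimum is the Bayes risk, attained by the Neyman--Pearson rule, and equals $\tfrac12 - \tfrac12 \TV(P_0^N, P_1^N)$; by \cref{lem:easier} this is no larger than $R(g_3; P_0, P_{\backdoor})$, so $R(g_1) \le R(g_3) + \epsilon$ uniformly on $\mathcal P$.

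\emph{Reverse direction.} I apply the hypothesis to the optimal Neyman--Pearson \cref{itm:easier} detector $g_3^\star := \ind\{dP_1^N / dP_0^N \ge 1\}$, for which \cref{lem:easier} gives $R(g_3^\star; P_0, P_{\backdoor}) = \tfrac12 - \tfrac12 \TV(P_0^N, P_1^N)$. For each $k \in \mathbb N$, setting $\epsilon = 1/k$ produces a \cref{itm:harder} detector $\hat g_1^{(k)}$ with sample budget $m(k)$ satisfying $R(\hat g_1^{(k)}) - R(g_3^\star) \le 1/k$ uniformly over $\mathcal P$. Since $m(k)$ need not be monotone in $k$, I construct a nondecreasing sequence $k_M \to \infty$ with $m(k_M) \le M$, and on an input $\DDi$ of size $M$ feed only the first $m(k_M)$ samples to $\hat g_1^{(k_M)}$; then $\epsilon(M) := 1/k_M$ witnesses \gls{pac}-learnability.

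\emph{Main obstacle.} The principal subtlety is bookkeeping: \gls{pac}-learnability on $\mathcal P'$ treats an element of $\mathcal X^N$ as a single atomic sample, whereas a \cref{itm:harder} detector ingests atomic $\mathcal X$-samples, so one must regroup $\DDi$ consistently when passing back and forth. A secondary point is upgrading the possibly non-monotone budget $m(k)$ from the approximation hypothesis into the uniformly monotone rate $\epsilon(M) \to 0$ required by \cref{def:pac-learnable}, which the sequence $k_M$ above handles by ``slowing down'' convergence to match whatever sample budget is available.
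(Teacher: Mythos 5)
Your proposal is correct and follows essentially the same route as the paper: the corollary is an immediate consequence of the forward direction of \cref{thm:ood_connection}, and your forward direction reproduces the paper's argument (identify $R(g_1;P_0,P_{\backdoor})$ with $\Exp[\bar R(\mathcal G(\DDi),P_0^N,P_1^N)]$ after regrouping $\DDi$ into $N$-blocks, bound it by the Bayes risk $\tfrac12-\tfrac12\TV(P_0^N,P_1^N)$ plus $\epsilon(M)$, and invoke \cref{lem:easier}), exactly as in \cref{lem:ood_intermediate}. Note that your reverse direction, while correct, proves the converse implication of \cref{thm:ood_connection} and is not needed for the corollary itself; all that remains for the stated claim is the one-line observation that $R(g_3)\le\alpha$ gives $R(g_1)\le\alpha+\epsilon$, with the ``consequently'' clause following from the ordering $\mathbf Q_1\prec\mathbf Q_2\prec\mathbf Q_3$ of \cref{rmk:ordering}.
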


\subsection{Generalizing to \glsentrylong{sbd}}
\label{sec:sbd_and_ood}

We can generalize \cref{def:alpha-error} to \gls{sbd} by providing a detector $g'(\mathbf Q')$ with input $\mathbf Q' = (\mathbf Q, X^{(I)}) = (\mathcal A(\DD^{(J)}), \DDi, X^{(I)})$, where a random variable $I$ on $\{0,1\}$ determines if a sample $X^{(I)}$ was drawn as $X^{(0)} \sim P_0$ ($I=0$) or as\footnote{Note, that $X^{(1)}$ is distributed according to $P_{\backdoor}$ and \textbf{not} according to $P_1 = (1-\gamma)P_0 + \gamma P_{\backdoor}$.} $X^{(1)} \sim P_{\backdoor}$ ($I=1$).

We define a general target function $t(j,i) \in \{0,1\}$ and require that a backdoor detector satisfies $g'(\mathbf Q') = t(J, I)$ with high probability.
In this case, it is beneficial to allow for an arbitrary probability distribution $P_{JI}$ of $(J, I)$ on $\{0,1\}^2$.
This naturally leads to the following alternative definition of $\alpha$-error detection, generalizing \cref{def:alpha-error}.
\begin{definition}
  \label{def:alpha-error1}
  A backdoor detection problem for a training algorithm $\mathcal A$ is determined by the following quantities:
  $\gamma \in (0,1]$, 
  $N \in \mathbb N$,
  $M \in \mathbb N$, 
  $\mathcal P \subseteq \mathcal P(\mathcal X)^2$,
  $P_{JI} \in \mathcal P(\{0,1\}^2)$, and
  $t\colon \{0,1\}^2 \to \{0,1\}$.

  Fixing these quantities, we define the \emph{risk} of a backdoor detector $g'$ associated with $(P_0, P_{\backdoor})$ as
  $R(g'; P_0, P_{\backdoor}) := \Pr\{g'(\mathbf{Q'}) \neq t(J,I)\}$,
  where the probability is w.r.t.\ $\mathbf{Q'} = (\mathcal A(\DD^{(J)}), \DDi, X^{(I)})$ and $(J, I)$.
  
  We say that a backdoor detector is \emph{$\alpha$-error} for some $\alpha \in [0,\frac 12]$ if, for every pair $(P_0, P_{\backdoor}) \in \mathcal P$, the risk is bounded by 
  $R(g'; P_0, P_{\backdoor}) \le \alpha$.    
\end{definition}

Then, even \gls{ood} can be modeled using our setup.
\Cref{fig:flavors} presents an overview of the target function $t(j,i)$ for \gls{mbd}, \gls{sbd} and \gls{ood}.

\tikzset{mymatrix/.style={
    column 1/.style={nodes={minimum width=12mm, draw=none,minimum height=0pt, text width=5mm}},
    row 1/.style={nodes={draw=none,minimum height=8mm}},
    nodes={draw, minimum height=10mm, outer sep=0pt, minimum width=10mm, inner sep=0pt, anchor=center,
      text width=10mm, align=center},
    matrix of nodes,
    inner sep=0pt,
    outer sep=0pt,
    column sep=-\pgflinewidth,
    row sep=-\pgflinewidth,
    CS/.style = {fill=gray!30},
  }}
\begin{figure*}[h]
  \centering
  {\hspace{5em}Sample Backdoored? ($i$)}
  
  \begin{minipage}[t][0cm][b]{.05\textwidth}
    \flushright
      \begin{tikzpicture}
      \node[rotate=90] {Model Backdoored? ($j$)};
    \end{tikzpicture}
  \end{minipage}
  \begin{subfigure}[b]{.25\textwidth}
    \centering
    \begin{tikzpicture}
      \matrix (m) [mymatrix]
      {       & $0/N$ & $1/Y$ \\
        $0/N$   & 0   & |[CS]| -- \\
        $1/Y$   & 1   & |[CS]| -- \\
      };
    \end{tikzpicture}
    \caption{$t_{\text{\glsentryshort{mbd}}}(j,i) = j$.}
    \label{fig:mdb}
  \end{subfigure}
  \begin{subfigure}[b]{.25\textwidth}
    \centering
    \begin{tikzpicture}
      \matrix (m) [mymatrix]
      {       & $0/N$ & $1/Y$ \\
        $0/N$   & 0   & |[CS]| -- \\
        $1/Y$   & 0   & 1 \\
      };
    \end{tikzpicture}
    \caption{$t_{\text{\glsentryshort{sbd}}}(j,i) = i$.}
    \label{fig:sdb}
  \end{subfigure}
  \begin{subfigure}[b]{.25\textwidth}
    \centering
    \begin{tikzpicture}
      \matrix (m) [mymatrix]
      {       & $0/N$ & $1/Y$ \\
        $0/N$   & 0   & 1 \\
        $1/Y$   & |[CS]| --   & |[CS]| -- \\
      };
    \end{tikzpicture}
    \caption{$t_{\text{\glsentryshort{ood}}}(j,i) = t_{\text{\glsentryshort{sbd}}}(j,i) = i$.}
    \label{fig:cdb}
  \end{subfigure}
  \caption{Target function $t(j,i)$ for different backdoor detection flavors. $j \in \{0,1\}$ signals if the training dataset is backdoored ($j=1$) or not ($j=0$), while $i \in \{0,1\}$ indicates if the test sample is backdoored.}
  \label{fig:flavors}
\end{figure*}
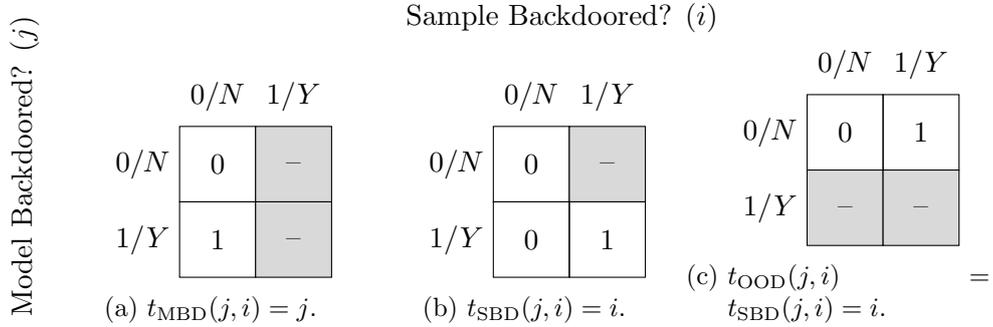

Note that several cells in the diagrams in \cref{fig:flavors} are grayed out. This reflects the fact that for certain flavors of backdoor detection, specific combinations of $(j,i)$ are not relevant. For \gls{mbd} for instance, we are not interested in whether the target sample $X^{(I)}$ contains a backdoor and we can thus assume $I=0$ in this case, effectively reducing this case to the problem introduced in \cref{sec:threat-model} with $M+1$ samples being drawn from $P_0$, i.e., $(\DDi, X^{(0)})$, available to the detector.
Conversely, the case of a clean model, i.e., $j=0$ and a sample with a backdoor, i.e., $i=1$ is not realistic for \gls{sbd} and we set $P_{JI}(0,1) = 0$ in this case.
By setting $J=0$ (i.e., model is trained on clean data and $P_{JI}(1,0) = P_{JI}(1,1) = 0$) and using $t_{\text{\glsentryshort{ood}}}(j,i) = t_{\text{\glsentryshort{sbd}}}(i,0) = i$, we obtain an \gls{ood} detection problem, where the detector has access to a model $\mathcal A(\DD^{(0)})$ trained on clean data and additional clean data $\DDi$. The detector then needs to determine whether $X^{(I)}$ is in or out of distribution.

To show how our result from \cref{sec:impossibility,sec:achievability} carry over to other variants of backdoor detection, we will directly use \cref{thm:imposs} to derive a similar result for \gls{sbd}.
In analogy to the different \namecrefs{itm:default} of \gls{mbd} detectors introduced in \cref{sec:results}, we have a \cref{itm:hard} detector $g_2'(\mathbf Q_2')$ with $\mathbf Q_2' = (\DD^{(J)}, P_0, X^{(I)})$ for \gls{sbd}.

For such a detector we can leverage a reduction argument to obtain the following.
\begin{corollaryE}
  \label{cor:sbd}
  Let $g_2'(\DD^{(J)}, P_0, X^{(I)})$ be a \cref{itm:hard} detector for an \gls{sbd} problem, where we have $r = \min\{P_{JI}(0,0), P_{JI}(1,1)\} > 0$ and $\mathcal P = \{(P_0, P_{\backdoor}) : \TV(P_0, P_{\backdoor}) \ge 1-\beta\}$. Then, if $g_2'$ is $\alpha$-error,
  we have $\alpha \ge r$ if $|\mathcal X| = \infty$,
  and for $|\mathcal X| < \infty$, we obtain
  \begin{align}
    \label{eq:sbd}
    N &\ge \frac{\log \frac{\alpha}{r}}{2} + \sqrt{\frac{(\log\frac{\alpha}{r})^2}{4} + (\beta |\mathcal X| - 1) \log\frac{r}{\alpha} }.
  \end{align}
\end{corollaryE}
\begin{proofE}

  Assuming that this detector is $\alpha$-error implies
  \begin{align}
    \alpha \ge R(g'_2, P_0, P_{\backdoor}) &\ge P_{JI}(0,0) \Pr\{g_2'(\mathbf Q_2') \neq 0 | J=I=0\} \nonumber\\*
                                           &\qquad+ P_{JI}(1,1) \Pr\{g_2'(\mathbf Q_2') \neq 1 | J=I=1\}  \\
                                           &\ge r \big(  \Pr\{g_2'(\mathbf Q_2') \neq 0 | J=I=0\} 
                                             +  \Pr\{g_2'(\mathbf Q_2') \neq 1 | J=I=1\} \big) .
  \end{align}

  Now consider the \gls{mbd} problem with $\gamma = 1$ and the training set size $N'=N+1$. We can define a \cref{itm:hard} detector\footnote{If $\gamma > 0$ for the \gls{sbd} problem, randomly replace elements of $\DD$ by independently drawn realizations of $P_0$.} $g_2(\DD, P_0) = g_2'(\DD, P_0, X_{N'})$ with risk
  \begin{align}
    R(g_2, P_0, P_{\backdoor}) &= \frac12 \Pr\{g_2'(\mathbf Q_2') \neq 0 | J=I=0\} + \frac12 \Pr\{g_2'(\mathbf Q_2') \neq 1 | J=I=1\} \\
                               &\le \frac{1}{2r} \alpha .
  \end{align}

  From \cref{thm:imposs}, we now know that $\frac{1}{2r} \alpha \ge \frac12$ if $\mathcal X = \mathbb N$ and obtain \cref{eq:sbd} for $|\mathcal X| < \infty$.
\end{proofE}

\section{\uppercase{Related Works}}
\label{sec:related_works}
\textbf{Backdoor attacks.} 
    Early backdoor methods rely on triggers that are visible to the human eye, and generally consist of a local patch on the samples \cite{Gu2017BadNets,Shafahi2018Poison,Nguyen2020Input}. Other attacks add a layer of stealthiness by using invisible triggers, which are commonly covering the whole sample and are not detectable by the human eye \cite{Chen2017Targeted,Zeng2021Rethinking,Li2021Invisible}. Additive attacks \cite{Gu2017BadNets,Chen2017Targeted,Shafahi2018Poison} fuse the triggers to the clean samples as additive noise. 
    Conversely, non-additive attacks \cite{Zeng2021Rethinking,Li2021Invisible,Nguyen2021WaNet} modify the samples by changing attributes such as the color of the pixels or applying spatial transformations. 
    Additionally, some attacks add the same trigger to all samples and are therefore sample-agnostic \cite{Gu2017BadNets,Chen2017Targeted}, while others are sample-specific \cite{Nguyen2020Input,Nguyen2021WaNet}. 
    Finally, recent attacks have been proposed to reduce the issue of the linear separability between clean and backdoored samples \cite{Qi2023Revisiting} which arises in many of the previously mentioned works. 

    \textbf{Backdoor defenses.} This work is concerned with ``post-training'' defenses, i.e. those methods that aim to remove or mitigate the backdoor effect from a backdoored model, as opposed to techniques that deal with the problem before~\cite{Udeshi2022Model,Gao2023Backdoor} or during training~\cite{Huang2022Backdoor}. The solutions that mostly align with the proposed frameworks are those that are designed to detect whether the model is backdoored \cite{Liu2017Neural,Wang2019Neural}, or those that can detect backdoored samples 
    \cite{Li2021Anti,Huang2022Backdoor,Liu2019ABS}.
    While some detector only work when the trigger is assumed to be sample-agnostic \cite{Chou2020SentiNet,Gao2019Strip,Tao2022Better}, others are reported to be effective on sample-specific triggers \cite{Zeng2021Rethinking,Liu2023Detecting}. Moreover, recent detectors propose to replace the need for a set of clean samples with the generation of perturbed samples which may help to create a representation of the backdoored samples \cite{Liu2023Detecting,Pang2023Backdoor}.
    In providing a theoretical analysis from first principles, and in suggesting a strong connection between \gls{ood} detection and backdoor detection, our work is complementary to \cite{Ma2022Beatrix}.
    Finally, is important to notice that most literature on backdoor detection focuses on the \gls{sbd} problem and on the mitigation of backdoor attacks, when a dataset is known to be backdoored~\cite{Udeshi2022Model,Huang2022Backdoor,Tran2018Spectral,Wu2021Adversarial}. 

    \textbf{Certifiable Defenses.} For data poisoning attacks, where the attacker's goal is only to diminish the accuracy of the trained classifier, certifiable defenses do exist~\cite{Steinhardt2017Certified,Koh2022Stronger}. Also for backdoor attacks, smoothing strategies~\cite{Weber2023Rab,Wang2020Certifying} were proposed, which allow for certified robustness against backdoor attacks. However, the threat model is severely restricted to very small (in size and amplitude) triggers, which can be successfully obscured by adding smoothing noise. 
    The impossibility result in \cref{sec:impossibility} showcases, why certifiable defenses cannot be mounted against a capable attacker in general.




\section{\uppercase{Conclusions}}
\label{sec:conclusions}

We provided a formal statistical definition of backdoor detection and investigated the feasibility of backdoor detection.
As the backdoor attack is usually not known to the defender,
in our analysis we focused on universal (adversary-unaware) backdoor detection.
This implies that such backdoor defense schemes must be robust against targeted attacks, which are crafted to fool the specific defense strategy,
excluding any ``security-by-obscurity'' schemes, where defense only holds as long as it is not public knowledge.
We concluded that under very general assumptions, universal (adversary-unaware) backdoor detection is not possible.
Thus, backdoor detectors need to be adversary-aware to perform well at their task.

Ultimately, this work makes the claim that designing universal (adversary-unaware) backdoor detection methods is an exercise in futility. As did \cite{Shokri2020Bypassing}, we make the case that backdoor detectors need to be adversary-aware or make specific assumptions on the data distribution and/or backdoor strategy employed. Unfortunately this is not the case for much published work, which implies that the proposed methods must fail on many untested instances of backdoor detection.

Furthermore, we note that when designing a backdoor detection algorithm, the advantage should be given to the attacker, which is able to adapt to a defense strategy, but not the other way around.

\subsubsection*{Acknowledgments}
This work is supported in part by grants from the National Science Foundation (NSF) and the ARO 77191, in part by the Army Research Office under grant \#W911NF-21-1-0155, in part by Intel, and in part by the NYUAD Center for Artificial Intelligence and Robotics, funded by Tamkeen under the NYUAD Research Institute Award CG010.

\bibliography{lit.bib} 



\clearpage
\appendix
\section{Proofs}
\label{sec:proofs}
\printProofs

\section{Auxiliary Results}
\label{sec:auxiliary-results}
This \lcnamecref{sec:auxiliary-results} contains auxiliary results, which are utilized in the proofs provided in \cref{sec:proofs}.

\begin{lemma}[Properties of Total Variation]
  \label{lem:TV}
  The total variation between two probability distributions $P_0, P_1 \in \mathcal P(\mathcal X)$, is given by
  \begin{align}
    \label{eq:TV}
    \TV(P_0, P_1) = \Vert P_0 - P_1 \Vert_{\mathrm{TV}} := \sup_{A} |P_0(A) - P_1(A)| ,
  \end{align}
  where the supremum is over all measurable sets $A \subseteq \mathcal X$. We then have
  \begin{align}
    \label{eq:TV_alt}
    \Vert P_0 - P_1 \Vert_{\mathrm{TV}} = 2 \inf_{X_0,X_1: P_{X_0} = P_0, P_{X_1} = P_1} \Pr\{ X_0 \neq X_1\} ,
  \end{align}
  where the infimum is over all random variables $X_0, X_1$ on $\mathcal X$, such that the marginal distributions
  satisfy $P_{X_0} = P_0$, $P_{X_1} = P_1$.
  For $P_0', P_1' \in \mathcal P(\mathcal Y)$, we have
  \begin{align}
    \label{eq:TV_prod}
    \Vert P_0 - P_1 \Vert_{\mathrm{TV}} \le \Vert P_0 \times P_0' - P_1 \times P_1' \Vert_{\mathrm{TV}} \le \Vert P_0 - P_1 \Vert_{\mathrm{TV}} + \Vert P_0' - P_1' \Vert_{\mathrm{TV}} .
  \end{align}
  and thus $\Vert P_0 - P_1 \Vert_{\mathrm{TV}} \le \Vert P_0^N - P_1^N \Vert_{\mathrm{TV}} \le N \Vert P_0 - P_1 \Vert_{\mathrm{TV}}$.
  Furthermore, for $\gamma \in [0,1]$,
  \begin{align}
    \label{eq:TV_convex}
    \Vert P_0 - (1-\gamma) P_0 - \gamma P_1 \Vert_{\mathrm{TV}} = \gamma \Vert P_0 - P_1 \Vert_{\mathrm{TV}} 
  \end{align}
\end{lemma}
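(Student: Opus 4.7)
\begin{proofE}
The plan is to dispatch the four claims in turn, each via a short standard argument; I will rely throughout on the signed-measure decomposition of $P_0 - P_1$ and on elementary manipulations.

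For \cref{eq:TV} I take the definition as given (it is the statement of the lemma rather than something to prove). For the coupling characterization \cref{eq:TV_alt}, I would prove the two inequalities separately. For the lower bound ($\ge$), fix any coupling $(X_0, X_1)$ with marginals $P_0, P_1$ and any measurable $A \subseteq \mathcal X$; then
\begin{align}
  |P_0(A) - P_1(A)| = |\Exp[\ind_A(X_0) - \ind_A(X_1)]| \le \Pr\{X_0 \neq X_1\},
\end{align}
since on $\{X_0 = X_1\}$ the two indicators agree. Taking the supremum over $A$ and the infimum over couplings gives $\|P_0 - P_1\|_{\mathrm{TV}} \le \inf \Pr\{X_0 \neq X_1\}$, which is off by a factor of two from what is claimed; so I will actually use the equivalent formulation $\|P_0 - P_1\|_{\mathrm{TV}} = \frac{1}{2} \int |dP_0 - dP_1|$ (standard from the Hahn decomposition of $P_0 - P_1$) to obtain the sharper bound. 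For the upper bound ($\le$), I would exhibit the maximal coupling: let $\mu = P_0 \wedge P_1$ (the measure with density $\min(dP_0/d\lambda, dP_1/d\lambda)$ w.r.t.\ a common dominating measure $\lambda$), set $p = \mu(\mathcal X) = 1 - \|P_0 - P_1\|_{\mathrm{TV}}$, and with probability $p$ draw $X_0 = X_1$ from $\mu/p$, otherwise draw them independently from the normalized positive parts of $P_0 - \mu$ and $P_1 - \mu$. This gives a coupling with $\Pr\{X_0 \neq X_1\} = 1 - p = \|P_0 - P_1\|_{\mathrm{TV}}$.

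For the product bound \cref{eq:TV_prod}, both inequalities follow cleanly from the coupling formulation. For the upper bound, given optimal couplings $(X_0, X_1)$ of $(P_0, P_1)$ and $(X_0', X_1')$ of $(P_0', P_1')$, their independent concatenation is a coupling of the product measures, and a union bound gives
\begin{align}
  \Pr\{(X_0, X_0') \neq (X_1, X_1')\} \le \Pr\{X_0 \neq X_1\} + \Pr\{X_0' \neq X_1'\},
\end{align}
which, by the coupling characterization just established, yields the claimed additivity. For the lower bound, restricting the sup in \cref{eq:TV} to cylinder sets $A \times \mathcal Y$ gives $\|P_0 \times P_0' - P_1 \times P_0'\|_{\mathrm{TV}} \ge \|P_0 - P_1\|_{\mathrm{TV}}$; alternatively one can argue by marginalization, which cannot increase TV. The $N$-fold consequence $\|P_0 - P_1\|_{\mathrm{TV}} \le \|P_0^N - P_1^N\|_{\mathrm{TV}} \le N \|P_0 - P_1\|_{\mathrm{TV}}$ is then immediate by induction on $N$.

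Finally, for \cref{eq:TV_convex}, observe that $P_0 - ((1-\gamma) P_0 + \gamma P_1) = \gamma (P_0 - P_1)$ as signed measures; applying the integral representation $\|\mu\|_{\mathrm{TV}} = \frac{1}{2} \int |d\mu|$ to both sides gives the identity by pulling the scalar $\gamma \ge 0$ outside. The main obstacle in the whole lemma is the coupling upper bound for \cref{eq:TV_alt}, which requires the explicit maximal-coupling construction; everything else is a two-line calculation. Since all of these facts are classical (and the paper itself cites Villani's Exercise 1.17), the proof will be kept short, emphasizing only the three nontrivial computations above.
\end{proofE}
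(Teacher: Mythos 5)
Your treatment of \cref{eq:TV_prod} and \cref{eq:TV_convex} matches the paper's proof essentially step for step: the lower bound in \cref{eq:TV_prod} via cylinder sets $A \times \mathcal Y$, the upper bound via an independent product of (near-)optimal couplings and a union bound, and \cref{eq:TV_convex} by pulling the nonnegative scalar out of the signed measure $\gamma(P_0 - P_1)$. The one substantive difference is \cref{eq:TV_alt}: the paper does not prove it at all but simply cites Villani, whereas you supply the standard maximal-coupling construction. That is a legitimate and more self-contained route, and it also justifies your later use of exactly optimal couplings in the product bound (the paper, not having constructed a maximal coupling, hedges with $\varepsilon$-optimal couplings and lets $\varepsilon \to 0$).

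The factor-of-two trouble you run into is not a defect of your argument but of the lemma as stated: \cref{eq:TV} normalizes $\Vert P_0 - P_1\Vert_{\mathrm{TV}}$ as $\sup_A |P_0(A) - P_1(A)| = \tfrac12 \int |dP_0 - dP_1|$, while \cref{eq:TV_alt} is copied from Villani's convention in which the total variation is the full mass $\int |dP_0 - dP_1|$. Under the paper's own definition the correct identity is $\Vert P_0 - P_1\Vert_{\mathrm{TV}} = \inf \Pr\{X_0 \neq X_1\}$ with no factor of $2$, and that is exactly what your indicator bound together with your maximal coupling (which achieves $\Pr\{X_0 \neq X_1\} = 1 - p = \Vert P_0 - P_1\Vert_{\mathrm{TV}}$) establishes. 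You should state this plainly rather than promising that the $L^1$ representation will recover ``the sharper bound''---it will not produce the extra factor of $2$, because that factor is simply not there under the normalization of \cref{eq:TV}. The mismatch is harmless downstream (the paper's use of \cref{eq:TV_alt} inside the proof of \cref{eq:TV_prod} carries matching factors of $2$ on both sides, which cancel), but your write-up should either adopt one normalization consistently or flag the inconsistency explicitly instead of leaving its resolution vague.
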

\begin{proof}
  The characterization \cref{eq:TV_alt} can be found in \cite{Villani2021Topics}.

  To show the first inequality in \cref{eq:TV_prod}, observe that
  \begin{align}
    \Vert P_0 \times P_0' - P_1 \times P_1' \Vert_{\mathrm{TV}} &= \sup_{B} |[P_0 \times P_0'](B) - [P_1 \times P_1'](B)| \\
                                                                &\ge \sup_{A} |[P_0 \times P_0'](A \times \mathcal Y) - [P_1 \times P_1'](A \times \mathcal Y)| \\
                                                                &= \Vert P_0 - P_1 \Vert_{\mathrm{TV}} .
  \end{align}
  
  To show the second inequality in \cref{eq:TV_prod}, we use \cref{eq:TV_alt} and for an arbitrary $\varepsilon > 0$, choose $(X_0, X_1) \perp (Y_0, Y_1)$ such that
  $P_{X_0} = P_0$, $P_{X_1} = P_1$, $P_{Y_0} = P_0'$, $P_{Y_1} = P_1'$, and
  \begin{align}
    \Vert P_0 - P_1 \Vert_{\mathrm{TV}}+ \varepsilon &\ge 2 \Pr\{ X_0 \neq X_1\}  , \\
    \Vert P_0' - P_1' \Vert_{\mathrm{TV}} + \varepsilon &\ge 2 \Pr\{ Y_0 \neq Y_1\}  .
  \end{align}
  Clearly $P_{X_0, Y_0} = P_0 \times P_0'$ as well as $P_{X_1, Y_1} = P_1 \times P_1'$ and thus by \cref{eq:TV_alt},
  \begin{align}
    \Vert P_0 \times P_0' - P_1 \times P_1' \Vert_{\mathrm{TV}} &\le 2 \Pr\{ (X_0,Y_0) \neq (X_1,Y_1)\} \\
                                                                &\le 2 \Pr\{X_0 \neq X_1\} + 2\Pr\{Y_0 \neq Y_1\} \\
                                                                &\le \Vert P_0 - P_1 \Vert_{\mathrm{TV}} + \Vert P_0' - P_1' \Vert_{\mathrm{TV}} + 2\varepsilon .
  \end{align}
  As $\varepsilon > 0$ was arbitrary, this proves \cref{eq:TV_prod}.

  To show \cref{eq:TV_convex}, we use \cref{eq:TV} and have
  \begin{align}
    \label{eq:4}
    \Vert P_0 - (1-\gamma) P_0 - \gamma P_1 \Vert_{\mathrm{TV}} &= \sup_{A} |P_0(A) - (1-\gamma) P_0(A) - \gamma P_1(A)| \\
                                                                &= \sup_{A} |\gamma P_0(A) - \gamma P_1(A)| \\
                                                                &= \gamma \Vert P_0 - P_1 \Vert_{\mathrm{TV}} .
  \end{align}
\end{proof}

\begin{lemma}
  \label{lem:Pr}
  Let $S_N$ be the type of $\mathbf X = (X_1, X_2, \dots, X_N)$, distributed according to $P^N$. For any $t \in [0,1]$, we then have the bound
  \begin{align}
    \label{eq:Pr}
    \Pr\left\{\TV(S_N, P) \ge t \right\} &\le 2|\mathcal X|\exp\left(-\frac{8Nt^2}{|\mathcal{X}|^2}\right) .
  \end{align}
\end{lemma}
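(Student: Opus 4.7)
The plan is to reduce the bound on $\TV(S_N, P)$ to a coordinatewise concentration statement and then apply Hoeffding's inequality. First I would rewrite the total variation distance in the form
\begin{align*}
  \TV(S_N, P) = \tfrac{1}{2} \sum_{x \in \mathcal X} |S_N(x) - P(x)|,
\end{align*}
which is the standard $\ell^1$ characterization on a discrete alphabet.

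Next, I would use a pigeonhole argument: if $\sum_{x \in \mathcal X} |S_N(x) - P(x)| \ge 2t$, then at least one coordinate must satisfy $|S_N(x) - P(x)| \ge 2t/|\mathcal X|$. Combining this with a union bound yields
\begin{align*}
  \Pr\{\TV(S_N, P) \ge t\} \le \sum_{x \in \mathcal X} \Pr\left\{|S_N(x) - P(x)| \ge \tfrac{2t}{|\mathcal X|}\right\} .
\end{align*}

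For each fixed $x$, observe that $S_N(x) = \frac{1}{N} \sum_{n=1}^N \indicator_x(X_n)$ is an average of $N$ i.i.d.\ Bernoulli$(P(x))$ random variables taking values in $[0,1]$, and has expectation $P(x)$. Hoeffding's inequality then gives
\begin{align*}
  \Pr\left\{|S_N(x) - P(x)| \ge s\right\} \le 2\exp(-2Ns^2)
\end{align*}
for any $s > 0$. Plugging in $s = 2t/|\mathcal X|$ yields a per-coordinate bound of $2\exp(-8Nt^2/|\mathcal X|^2)$, and summing over the $|\mathcal X|$ symbols produces exactly \cref{eq:Pr}.

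There is no substantive obstacle here; the only thing to be careful about is the constant in the exponent, which is why I would split the threshold evenly across the $|\mathcal X|$ coordinates (giving the factor $2t/|\mathcal X|$, hence the $8$ in the exponent). A slightly tighter bound could be obtained via a direct McDiarmid/bounded-differences argument on $\TV(S_N, P)$ as a function of the samples, but the pigeonhole-plus-Hoeffding route suffices for the stated constant and keeps the argument self-contained.
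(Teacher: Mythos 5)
Your proof is correct and follows essentially the same route as the paper's: per-symbol Hoeffding, a union bound over the alphabet, and the observation that $\TV(S_N,P)\ge t$ forces some coordinate deviation of at least $2t/|\mathcal X|$ (the paper phrases this via the complement event rather than pigeonhole, but it is the same step). The constants match exactly.
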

\begin{proof}
  By using the Hoeffding's inequality we can bound the probability of the deviation of $S_N$ from its expected value. In particular, we have that

  \begin{align}
    \label{eq:achievability:2}
    \Pr\left\{\left|S_N(x) - P(x)\right| \ge t\right\}  &=\Pr\left\{\left|S_N(x) - \mathbb E[S_N(x)]\right| \ge t\right\} \\
                                                        &\le 2\exp\left(\frac{-2t^2}{\sum_{n=1}^{N}(\frac 1 N - 0)^2}\right)\\
                                                        &= 2\exp\left(\frac{-2t^2}{\frac 1 N}\right) \\
                                                        &= 2\exp\left(-2Nt^2\right) ,
  \end{align}
  where we note that $\mathbb E[S_N(x)] = \frac1N \sum_{n=1}^N \mathbb E[\indicator_{x}[X_n]] = P(x)$.

  The next and final step is to extend the bound to the whole alphabet $\mathcal X$. In order to do so, we define the event $\mathcal{A}_{x} = \{\left|S_N(x) - P(x)\right| \ge t\}$.
  We want to bound the probability of the event
  \begin{align}
    \mathcal A = \bigcup_{x \in \mathcal X} \mathcal A_x  = \left\{ \exists x \in \mathcal X : \mathcal{A}_{x} \right\}.
  \end{align}

  By applying the union bound we obtain
  \begin{align}
    \Pr \mathcal A &= \Pr\left\{\bigcup_{x\in\mathcal X}\mathcal{A}_{x}\right\} \\
                   &\le \sum_{x\in\mathcal X}\Pr\left\{\mathcal{A}_{x}\right\} \\
                   &\le \sum_{x\in\mathcal X}2\exp\left(-2Nt^2\right)\\
    \label{eq:achievability:5}
                   &= 2|\mathcal X|\exp\left(-2Nt^2\right) .
  \end{align}

  Let us consider the event $\mathcal{A} = \{ \exists x \in \mathcal X : \left|S_N(x) - P(x)\right| \ge t\}$: this is the error event, i.e., the divergence between the observed samples frequency and its expected value diverges more than a given value $t>0$ for at least one $x \in \mathcal X$. The complement of this event is the event that the divergence is less than $t$ for all $x \in \mathcal X$, i.e., the event that the observed frequency is close to the expected value for all $x \in \mathcal X$. This can be written as
  \begin{align}
    \mathcal A^c = \{\forall x\in\mathcal{X},\ \left|S_N(x) - P(x)\right|<t \}.
  \end{align}
  Now, $\mathcal A^c$  implies that
  \begin{align}
    \sum_{x\in\mathcal X}\left|S_N(x) - P(x)\right| &< t|\mathcal X| \\
    \frac 1 2\sum_{x\in\mathcal X}\left|S_N(x) - P(x)\right| &<  \frac 1 2 t|\mathcal X| \\
    \TV(S_N, P) &< t^{\prime}
  \end{align}
  where $t^{\prime} = \frac 1 2 t|\mathcal X|$.
  Thus, $\Pr\mathcal A^c \le \Pr\{\TV(S_N, P) < t^{\prime}\}$ and therefore
  \begin{align} \label{eq:achievability:5.5}
    \Pr\{\TV(S_N, P) \ge t^{\prime}\} \le \Pr \mathcal A \le 2|\mathcal X|\exp\left(-2Nt^2\right) ,
  \end{align}
  where we have used \cref{eq:achievability:5}.
  By writing $t$ in terms of $t^{\prime}$ in \cref{eq:achievability:5.5}, we obtain \cref{eq:Pr}.
\end{proof}

\begin{lemma}
  \label{lem:ood_intermediate}
  Given $\mathcal P$ and $N \in \mathbb N$ and letting $\gamma \in (0,1]$, \gls{ood}-detection is \gls{pac}-learnable on $\mathcal P' = \{(P_0^N, P_{1}^N) : (P_0, P_{\backdoor}) \in \mathcal P\}$ with $P_1 = (1-\gamma)P_0 + \gamma P_{\backdoor}$ if and only if the following holds: For the \gls{mbd} problem, there exists a sequence of \cref{itm:harder} backdoor detectors $g_1^M(\DD, \DDi)$ for $M=1,2,\dots$ and a decreasing sequence $\epsilon(m)$ with $\lim_{m\to\infty} \epsilon(m) = 0$ such that for any $M \in \mathbb N$ and any pair $(P_0, P_{\backdoor}) \in \mathcal P$, we have
  \begin{align}
    \label{eq:ood_intermediate}
    \frac12 \TV(P_0^N, P_{1}^N) - \frac12 + R(g_1^M, P_0, P_{\backdoor})  \le \epsilon(M) .
  \end{align}
\end{lemma}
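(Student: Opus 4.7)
The central idea is that a Type 1 backdoor detector $g_1(\DD, \DDi)$ amounts to using $\DDi$ to produce a classifier on the product space $\mathcal{X}^N$, which is then evaluated on $\DD$. This is exactly what an OOD learning algorithm does on the enlarged alphabet $\mathcal{X}^N$, with the only mismatch being that $\DDi$ consists of $M$ iid samples in $\mathcal{X}$ (i.e.\ from $P_0$) whereas a PAC-learner on $\mathcal{P}'$ expects samples in $\mathcal{X}^N$ (i.e.\ from $P_0^N$). This is reconciled by regrouping consecutive samples in blocks of length $N$.

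First, I would record the Bayes-optimum identity
\begin{align*}
  \inf_{f\colon \mathcal{X}^N\to\{0,1\}} \bar R(f, P_0^N, P_1^N) = \frac12 - \frac12 \TV(P_0^N, P_1^N),
\end{align*}
which is immediate from the Neyman–Pearson lemma (and already used, in essence, in the proof of \cref{lem:easier}). This is what rewrites the quantity $\frac12\TV(P_0^N, P_1^N) - \frac12 + R(g_1^M, P_0, P_{\backdoor})$ appearing in \cref{eq:ood_intermediate} as the excess risk over the Bayes optimum, matching the LHS of \cref{eq:ood-learnable}.

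For the ``only if'' direction, given a PAC algorithm $\mathcal{G}$ for OOD on $\mathcal{P}'$ with guarantee $\epsilon(K)$, I would define the detector $g_1^M(\DD, \DDi) := [\mathcal{G}(\widetilde{\DDi})](\DD)$, where $\widetilde{\DDi} \in (\mathcal{X}^N)^{\lfloor M/N\rfloor}$ is obtained by splitting $\DDi$ into consecutive blocks of length $N$; each block is iid from $P_0^N$, so PAC-learnability applies and yields
\begin{align*}
  R(g_1^M, P_0, P_{\backdoor}) = \Exp[\bar R(\mathcal{G}(\widetilde{\DDi}), P_0^N, P_1^N)] \le \tfrac12 - \tfrac12\TV(P_0^N, P_1^N) + \epsilon(\lfloor M/N\rfloor),
\end{align*}
and one sets $\epsilon'(M) := \epsilon(\lfloor M/N\rfloor)$, which inherits monotonic decrease and $\epsilon'(M)\to0$. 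For the ``if'' direction, given Type 1 detectors $g_1^M$ with error $\epsilon(M)$, I would define $\mathcal{G}$ on $K$ input samples $D^{(1)},\dots,D^{(K)} \in \mathcal{X}^N$ by concatenating them into $\DDi \in \mathcal{X}^{KN}$ and returning the map $\mathbf{x}\mapsto g_1^{KN}(\mathbf{x}, \DDi)$; then $\DDi$ is iid $P_0$, and
\begin{align*}
  \Exp[\bar R(\mathcal{G}(D^{(1:K)}), P_0^N, P_1^N)] = R(g_1^{KN}, P_0, P_{\backdoor}) \le \tfrac12 - \tfrac12\TV(P_0^N, P_1^N) + \epsilon(KN),
\end{align*}
so the new rate $\epsilon'(K) := \epsilon(KN)$ is monotonically decreasing and tends to zero.

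The main technical obstacle is purely bookkeeping: carefully checking that the reshaping between $\mathcal{X}^M$ and $(\mathcal{X}^N)^{\lfloor M/N\rfloor}$ preserves the iid structure (it does, by Fubini) and that monotone decrease of $\epsilon$ is preserved under the reparameterizations $M\mapsto\lfloor M/N\rfloor$ and $K\mapsto KN$. No sharp inequalities or new ideas beyond \cref{lem:easier} are required; the lemma is essentially a dictionary between two ways of phrasing the same hypothesis-testing problem, mediated by the Bayes-optimum identity.
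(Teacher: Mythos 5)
Your proposal is correct and follows essentially the same route as the paper: both directions hinge on the Bayes-optimum identity $\inf_f \bar R(f,P_0^N,P_1^N)=\frac12-\frac12\TV(P_0^N,P_1^N)$ from \cref{lem:easier}, with the forward direction regrouping $\DDi$ into $\lfloor M/N\rfloor$ blocks of length $N$ (the paper slices to $[\DDi]_1^{mN}$, which is the same device) and the backward direction concatenating $K$ product-space samples and setting $\epsilon'(K)=\epsilon(KN)$. The only cosmetic difference is notational; the bookkeeping on monotonicity and the i.i.d.\ structure is handled identically in both arguments.
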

\begin{proof}
  Assume that \gls{ood}-detection is \gls{pac}-learnable on $\mathcal P'$. By definition we can find a function $\mathcal G \colon \bigcup_{m=1}^\infty \mathcal X^{Nm} \to \{0,1\}^{\mathcal X^N}$ and a monotonically decreasing sequence $\epsilon'(m)$ that tends to zero and satisfies for every $(P_0, P_{\backdoor}) \in \mathcal P$, $m \in \mathbb N$, that
  \begin{align}
    \label{eq:ood_eq}
    \Exp[\bar{R}(\mathcal G(\DDi), P_0^N, P_{1}^N)] - \inf_{f} \bar{R}(f, P_0^N, P_{1}^N) \le \epsilon'(m) ,
  \end{align}
  where we set the size of $\DDi$ to be $M = mN$ and the infimum is over all functions $f\colon \mathcal X^N \to \{0,1\}$.
  
  For any $M \in \mathbb N$, we define\footnote{We use the notation $[\mathbf x]_k^l = [(x_1, x_2, \dots, x_N)]_k^l = (x_k, x_{k+1}, \dots, x_{l})$ for slicing.}
  $g_1^M(\DD, \DDi) := \mathcal G([\DDi]_{1}^{mN})(\DD)$ as well as $\epsilon(M) = \epsilon'(m)$, where $m$ is the largest integer such that $mN \le M$.
  Notice that $R(g_1^M, P_0, P_{\backdoor}) = \Exp[\bar{R}(\mathcal G([\DDi]_{1}^{mN}), P_0^N, P_{1}^N)]$ and that $\inf_{f} \bar{R}(f, P_0^N, P_{1}^N) = \frac12 - \frac12 \TV(P_0^N, P_{1}^N)$ by \cref{lem:easier}. We thus obtain from \cref{eq:ood_eq}, that for any $M \in \mathbb N$,
  \begin{align}
    \epsilon(M) = \epsilon'(m) &\ge \Exp[\bar{R}(\mathcal G([\DDi]_{1}^{mN}), P_0^N, P_{1}^N)] - \frac12 + \frac12 \TV(P_0^N, P_{1}^N)   \\
                               &= R(g_1^M, P_0, P_{\backdoor}) - \frac12 + \frac12 \TV(P_0^N, P_{1}^N) .
  \end{align}
  Noting that $\epsilon(M)$ approaches zero completes this part of the proof.

  On the other hand, assume that $g_1^M(\DD, \DDi)$ and $\epsilon(M)$ satisfy the requirement \cref{eq:ood_intermediate}. For any $m \in \mathbb N$, we then set $M=mN$ and define $\mathcal G(\DDi)(\DD) := g_1^{M}(\DD, \DDi)$ as well as $\epsilon'(m) = \epsilon(M)$. We can now rewrite \cref{eq:ood_intermediate} using $\Exp[\bar{R}(\mathcal G(\DDi), P_0^N, P_{1}^N)] = R(g_1^M, P_0, P_{\backdoor})$ and \cref{lem:easier} to obtain
  \begin{align}
    \epsilon'(m) =\epsilon(mN) &\ge \frac12 \TV(P_0^N, P_{1}^N) - \frac12 + R(g_1^M, P_0, P_{\backdoor}) \\
                               &= \Exp[\bar{R}(\mathcal G(\DDi), P_0^N, P_{\backdoor}^N)] - \inf_{f} \bar{R}(f, P_0^N, P_{1}^N) .
  \end{align}
  Thus, we have shown that the algorithm $\mathcal G$ and the sequence $\epsilon'$ satisfy \cref{def:pac-learnable} and \gls{ood}-detection is \gls{pac}-learnable on $\mathcal P'$.
\end{proof}

\section{Additional Results}
\label{sec:additional-results}
\begin{lemma}
  \label{lem:reduction}
  Let $g_l$ be a detector as listed in \cref{sec:results} with input $\mathbf Q_l$ for $l \in \{0,1,2\}$, where we set $g_0 = g$ and $\mathbf Q_0 = \mathbf Q$. If $g_l$ is $\alpha$-error in the sense of \cref{def:alpha-error}, then for $m \in \{1,2,3\}$ and $m > l$ we can find a backdoor detector $g_m$ with input $\mathbf Q_m$ that is also $\alpha$-error.
\end{lemma}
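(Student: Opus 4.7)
The plan is to show that a more-informed detector can always simulate a less-informed one by algorithmically generating (or passing through) the information that is missing from the less-informed detector's input. Since the proposed ordering $\mathbf{Q}_0 \prec \mathbf{Q}_1 \prec \mathbf{Q}_2 \prec \mathbf{Q}_3$ is transitive, it suffices to verify the single-step reductions $l \to l+1$ for $l \in \{0,1,2\}$ and then chain them.

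First, I would handle the single-step reductions. For $l=0, m=1$, given the input $\mathbf{Q}_1 = (\mathcal{D}^{(J)}, \mathcal{D}')$, define
\begin{align*}
  g_1(\mathcal{D}^{(J)}, \mathcal{D}') := g_0(\mathcal{A}(\mathcal{D}^{(J)}), \mathcal{D}'),
\end{align*}
using a fresh run of the (possibly stochastic) training algorithm $\mathcal{A}$ to produce the parameter vector. Since $\mathcal{A}(\mathcal{D}^{(J)})$ has exactly the same joint distribution with $(J, \mathcal{D}')$ as the parameter vector supplied to $g_0$ in the original problem, the risk is identical, so $g_1$ inherits the $\alpha$-error guarantee. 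For $l=1, m=2$, given $\mathbf{Q}_2 = (\mathcal{D}^{(J)}, P_0)$, use the sampling oracle to draw $\tilde{\mathcal{D}}' \sim P_0^M$ independently of everything else and define $g_2(\mathcal{D}^{(J)}, P_0) := g_1(\mathcal{D}^{(J)}, \tilde{\mathcal{D}}')$; again, $\tilde{\mathcal{D}}'$ has the same distribution as $\mathcal{D}'$ in the Type~1 problem, and the extra randomness is absorbed into the definition of a randomized Type~2 detector, preserving the risk. For $l=2, m=3$, simply discard $P_{\backdoor}$ and set $g_3(\mathcal{D}^{(J)}, P_0, P_{\backdoor}) := g_2(\mathcal{D}^{(J)}, P_0)$; the risk in \cref{eq:risk} does not depend on how the detector uses $P_{\backdoor}$, so it is unchanged.

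Finally, for general $m > l$ with $m - l \ge 2$, I would iterate the above constructions, i.e., form $g_{l+1}, g_{l+2}, \ldots, g_m$ in sequence. Each step preserves the risk for every $(P_0, P_{\backdoor}) \in \mathcal{P}$, so the resulting $g_m$ satisfies $R(g_m; P_0, P_{\backdoor}) \le \alpha$ as required.

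I do not expect any genuine obstacle here: the lemma is essentially a bookkeeping argument about information reduction. The only point requiring mild care is to be explicit about the independence of the freshly generated randomness (a new training run in step one, the freshly drawn $\tilde{\mathcal{D}}'$ in step two) from the remaining random variables, so that the induced joint distribution on the input to the simulated detector matches its original input distribution exactly; once this is verified, equality of risks is immediate from \cref{eq:risk}.
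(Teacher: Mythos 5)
Your proposal is correct and follows essentially the same route as the paper's proof: both reduce to the single-step case $m=l+1$ and chain, with the same three constructions (re-running $\mathcal A$ for $l=0$, sampling $\DDi \sim P_0^M$ via the oracle for $l=1$, and discarding $P_{\backdoor}$ for $l=2$). Your added remark about verifying independence of the fresh randomness is exactly the point that makes the risks coincide.
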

\begin{proof}
  It is sufficient to show the \lcnamecref{lem:reduction} for $m=l+1$. The claim then follows by applying the result repeatedly.

  In the case $l=2$ (and $m=3$) we obtain $g_3$ with $R(g_3, P_0, P_{\backdoor}) = R(g_2, P_0, P_{\backdoor})$ by $g_3(\DD, P_0, P_{\backdoor}) = g_2(\DD, P_0)$.

  For $l=1$, we can define the randomized detector $g_2(\DD, x, P_0)$ to first draw $M$ i.i.d.\ samples $\DDi \sim P_0^M$ and then yield $g_2(\DD, P_0) = g_1(\DD, \DDi)$.

  Finally, for $l=0$ we obtain $g_1$ with equal risk by defining $g_1(\DD, \DDi) = g(\mathcal A(\DD), \DDi)$.
\end{proof}%



\end{document}